\documentclass[12pt,a4paper]{article}
\usepackage{amsfonts, enumerate, amsmath, amsthm, mathrsfs}

\usepackage{natbib}
\setcitestyle{authoryear,open={(},close={)}} %Citation-related commands

\usepackage{verbatim}

\newtheorem{theorem}{Theorem}

\newtheorem{lemma}{Lemma}
\newtheorem{proposition}{Proposition}
\newtheorem{definition}{Definition}

\def\d{{\rm d}}
\def\E{{\rm E}}

\def\P{{\rm P}}
\def\wt{\widetilde}

\newcommand{\var}{\textrm{Var}}
\newcommand{\ind}{\mathbf{1}}
\newcommand{\BR}{\mathbb{R}}

\newcommand{\mean}{\mathbb{E}}
\usepackage{graphicx, color}

\usepackage{mathtools}

\newcommand{\bx}{\boldsymbol{x}}
\newcommand{\bR}{\boldsymbol{R}}
\newcommand{\ba}{\boldsymbol{a}}
\newcommand{\bS}{\boldsymbol{S}}
\newcommand{\bX}{\boldsymbol{X}}

\begin{document}

\title{Posterior Inference via Hill's Prediction Model}
\author{
Pier Giovanni Bissiri\footnote{University of Milano-Bicocca, Italy, email: pier.bissiri@unimib.it}, 
Chris Holmes\footnote{University of Oxford, UK, email: chris.holmes@stats.ox.ac.uk} \,\,\&  Stephen G Walker\footnote{University of Texas at Austin, USA,  email: s.g.walker@math.utexas.edu}}

\date{}

\maketitle

\begin{abstract}This paper is concerned with the construction of prior free posterior distributions which rely on the use of one step ahead predictive distribution functions. These are typically more straightforward to motivate than prior distributions. Recent interest has been with Hill's $A_n$ prediction model through what has become known as conformal prediction. This model predicts the next observation to lie with equal probability in the intervals created by the observed data. The prediction model generates complete data sets which can be used to provide posterior inference on any statistic of  interest. 
\end{abstract}

Keywords:	Predictive model; Copula; Conformal prediction; Uncertainty quantification.

\section{Introduction}

Statistics is primarily concerned with uncertainty and quantifying the uncertainty using probability models. It is therefore imperative to properly define the source of uncertainty so as to know what to model with probabilities. It should be clear that uncertainty is not being created by what has been observed. On the other hand, it is clear that uncertainty is created by what has not been seen. To elaborate on this point, what has not been seen is defined as the further information, usually represented by a sample size, which if seen the  answer to the statistical problem  becomes known to a satisfactory degree of accuracy. 

To quantify the uncertainty it becomes necessary to model the unseen data given the observed data. While this may be assisted by an assumed model for the data, it is not the main target. A predictive model using one step ahead  predictive density functions is all that is needed to provide a model for $X_{n+1:N}$ given $X_{1:n}$ for any $N>n$.

Both Bayesian and Frequentist approaches rely heavily on a likelihood function. 
Our perspective is fundamentally different. Namely, we do not model the observed data; rather, we build predictive models for the unobserved data needed for inference, which can be written as 
\begin{equation}\label{predictive}
p_{x_{1:n}}(x_{n+1:N}), 
\end{equation}
where \( p_{x_{1:n}} \) denotes a joint predictive model for the future observations and depends on the observed data. The joint predictive model, \( p_{x_{1:n}} \),  depends on the data but it is not a conditional probability, there is no $\mid$ as there is no joint model; i.e. 
\begin{equation}\label{noBayes}
p_{x_{1:n}}(x_{n+1:N}) \not\equiv  p(x_{1:N})/p(x_{1:n}),
\end{equation}
since we are not defining a $p(x_{1:n})$. Indeed, the inferential and statistical problem commences once the data has been observed. This makes our approach fundamentally different to
all likelihood based approaches, Bayesian and Frequentist, which rely on the preparation of a likelihood function, or a set of possible likelihood functions, as part of a protocol or simply as a device to avoid double use of the data in which a model is selected and then estimated from a single data set. Avoidance of this concern usually involves splitting the data into two parts to both estimate and validate models.

The lack of a conditional model (\ref{noBayes}) separates what we are doing very much from classical Bayes. For classical Bayes there is a heavy dependence on joint distributions for not only the observed data but future observations, ensuring the entire sequence is exchangeable. So, for example, the predictive
$p(x_{n+1}\mid x_{1:n})$ is a proper conditional distribution, in the sense that the $X_{n+1}$ depends on $X_{1:n}$.
On the other hand, our predictive is such that it is the density function itself, i.e. $p_{x_{1:n}}$ which depends on the observed data.  We can do this without imposing any structural properties on the $X_{1:n+1}$ since the focus should be on the predictive density function.

One of the original papers to use predictive models for Bayesian style inference was
\cite{Fong2023}, which is based on the original ideas appearing in \cite{Doob1949}. Suppose $X_{1:n}$ has been observed. Uncertainty from a statistical perspective arises from the unseen $X_{n+1:\infty}$; i.e. whatever is unknown would become known if the full data set has been seen. To quantify this uncertainty using probability a model $p_{x_{1:n}}(x_{n+1:N})$ is constructed for all lengths $N$. It is most convenient to do this using one-step ahead predictive density functions, so
$p_{x_{1:n}}(x_{n+1:N})=\prod_{m=n+1}^N p_{x_{1:n}}(x_m\mid x_{n+1:m-1}).$
The general idea is to sample $X_{n+1:N}$ and then present the statistic of interest, say $T=T(X_{1:N})$, which automatically has a posterior distribution given $X_{1:n}$. To implement this strategy one only needs to be able to sample from the chosen $p_{x_{1:n}}(x_m\mid x_{n+1:m-1})$, for any length $m>n$. 

A well known predictive model is the P\'olya-urn scheme, see \cite{Ferguson1973} and \cite{Blackwell1973}. For this model,  the one-step ahead predictive distribution is the empirical distribution function. So the sampling of $X_{n+1:N}$ would involve sampling $X_m$ from the empirical with observations $X_{1:m-1}$ and then the empirical distribution gets updated with the new observation $X_m$. This is precisely the idea of sampling a bag of balls of different colors, replacing the one chosen, and then adding another of the same color. This is known as the P\'olya-urn scheme and the sequence $X_{n+1:\infty}$ is exchangeable. 

However, for the existence of a posterior distribution of a statistic of interest, for example, the mean, it is only {\sl asymptotic exchangeability} that is required. That is, for some exchangeable sequence $(Z_i)_{i=1}^\infty$, 
$$(X_{m+1},X_{m+2},\ldots)\to (Z_1,Z_2,\ldots)$$
in distribution as $m\to\infty$. The posterior distribution is the de Finetti measure for the exchangeable sequence $(Z_i)$ which is connected to the sequence $(X_{m})_{m>n}$. Given that the sequence is constructed from a joint model which depends on the observed data, the asymptotic measure will depend on the observed data and hence can be described as a posterior measure. The goal therefore is to ensure the aforementioned sequence $(X_m)_{m>n}$ is asymptotically exchangeable. For more on asymptotic exchangeability, see \cite{Aldous85}.

The P\'olya-urn model does not produce  new samples, the sequence $X_{n+1:N}$ is comprised of the observed samples $X_{1:n}$. It can be shown that the limiting empirical distribution function is given by
$F(x)=\sum_{i=1}^n w_i\,\ind_{[X_i,\infty)}(x),$
where the random weights $(w_i)$ are from a Dirichlet distribution with parameters all set to 1 and \(\ind_A(x)=1\) if \(x\in A\) and 
\(\ind_A(x)=0\) if \(x\notin A\). Posterior inference for this model is known and was introduced in \cite{Rubin1981}, and is the Bayesian bootstrap. A generic use of this predictive model has the shortcoming in that no new sample values are created.

A more satisfying model would allow the samples to be continuous and yet allow the predictive distributions to be as fundamental in nature as the empirical distribution function is.  This is provided by conformal predictive distributions;  with the original idea being developed by \cite{Hill68}.  To see the fundamental idea, let $X_1,\ldots,X_n$ be an independent and identically distributed sequence with distribution $F$ and let $(X_{(i)})$ be the ordered values. Then it is well known that if $X$ is distributed as $F$, independently of the $(X_i)$, then 
$$E\left[P\bigg(X\in (X_{(i-1)},X_{(i)})\bigg)\right]=\frac{1}{n+1}$$
for $i=1,\ldots, n+1$, where $X_{(0)}=-\infty$ and $X_{(n+1)}=\infty$. Note the probability is acting on \(X\) and and the expectation on the  $(X_{(i)})$.
Hence, a predictive model for which $P(X_{n+1}\in (X_{(i-1)},X_{(i)}))=1/(n+1)$ for all $i$ is well motivated. The form of prediction described has become known as {\sl conformal prediction} and has now developed into a large area of research. See, for example, \cite{Gammerman98}, \cite{Vovk08}, \cite{Tocca2019}, \cite{Angel21} and \cite{Vovk22}. 

In section 2 we describe the framework for posterior distributions being constructed from predictive models. We do this using a specific example of a mean. In section 3 we detail the Hill prediction model and demonstrate the asymptotic exchangeability of the sequence $X_{n+1:\infty}$ conditional on the observed $X_{1:n}$. Section 4 provides the extension to the multivariate models which rely on copulas. Section 5 contains illustrations.

\section{The framework}

Suppose interest is in learning about $\theta$ which can be learnt completely from an infinite sample $X_{1:\infty}$ via the sequence $T_n=T(X_{1:n})$ and $T_n\to T_\infty=\theta^*$ almost surely. Here $\theta^*$ is the unknown true value. The assumption is that $T_n$ has been observed for some finite $n$. It goes without saying there is no uncertainty about the value of $T_n$ yet there remains uncertainty about the value of $T_\infty$. To quantify the uncertainty it is necessary to construct a probability model $p_{x_{1:n}}(T_\infty)$. This could well be constructed most conveniently by setting
$p_{x_{1:n}}(x_{n+1:N})$ for all $N>n$ and making use of one step ahead predictive density functions,
$p_{x_{1:n}}(x_{n+1:N})=\prod_{i=n+1}^N p_{x_{1:n}}(x_i\mid x_{n+1:i-1}).$
The distribution of $T_N$ given $x_{1:n}$ follows from
$T_N=T(X_{1:N})$.
In order for the existence of limits, one needs the existence of the limit of the distribution of $T_N$ as $N\to\infty$. This yields the posterior $p_{x_{1:n}}(T_\infty)$ which is the posterior for $\theta^*$. For a tidying exercise, asymptotic consistency based on an increasing sample size $n$ would ask that the $p_{T_n}(T_\infty)$ converges to a point mass at $\theta^*$ as $n\to\infty$. 

As an illustration of this, consider the normal model $N(x\mid\theta,1)$. Define $T_n=\bar{X}_n$, the sample mean, and
based on a one step ahead predictive 
$p_{x_{1:n}}(x_m\mid x_{n+1:m-1})=N(x_m\mid \bar{x}_{m-1},1)$
we get 
$$p(T_N\mid T_n)=N\left(T_N\mid T_n,1+\sum_{i=n+1}^{N-1}1/i^2\right).$$
The limit exists as $N\to\infty$ since the variance converges.
Then as $n\to\infty$ the variance converges to 0 and the mean converges to $\theta^*$.

When no such simple predictive exists, the reliance is on a sequence of one step ahead predictive density functions
$p_{x_{1:n}}(x_m\mid x_{n+1:m-1})$
for all $m>n$. In order for a posterior to exist, the most straightforward condition to impose is that the sequence $(X_{n+1},X_{n+2},\ldots)$ be asymptotically exchangeable.
This is equivalent to the sequence of random distributions
$P_{x_{1:n}}(x_m\mid x_{n+1:m-1})$ converging to a random distribution function. Hence, asymptotically the $(X_m)$ are independently and identically distributed from this limit random distribution and hence are asymptotically marginally exchangeable and which in turn implies the existence of a posterior measure, given $X_{1:n}$.

Returning to the normal illustration, from the one step ahead predictive it would need to be shown that the $\bar{X}_N$ converges almost surely to some random value. Indeed,
$\bar{X}_{m+1}=(m\bar{X}_m+X_{m+1})/(m+1)=X_m+z_m/(m+1),$
where $(z_m)$ is a sequence of independent and identically distributed standard normal random variables. Using Doob's martingale theorem it is clear that $(\bar{X}_m)$ converges almost surely. And hence for large $N$ the sequence $(X_{N+1},X_{N+2},\ldots)$ converges to an exchangeable sequence.

\section{Hill's prediction model}

Denote by \(x_1,\dotsc,x_n\) the observed data and assume that \(x_i \in [0,1]\) for \(i=1,\dotsc,n\).
Following \cite{Hill68}, the predictive value \(X_{m+1}\), $m\geq n$, is equally likely 
to fall in any of the open intervals between the successive order statistics 
of \(x_1,\dotsc,x_m\). From such intervals, it is natural to let \(X_{m+1}\) be uniformly distributed in the chosen interval.
In this model, the predictive cumulative distribution function of \(X_{m+1}\) is the ogive (or cumulative frequency polygon) based on \(x_1,\dotsc,x_m\).

To describe this mathematically, 
we assume that all random variables considered in this paper are defined on the some probability space \((\Omega, \mathscr{F},\P)\). Moreover, given the sequence \(X_1,X_2,\dotsc\), 
denote by \(X^m_{(j)}\) 
the \(j\)-th order statistic (namely the \(j\)--th smallest value) among \(X_1,\dotsc,X_m\), 
for \(j=1,\dotsc,m\) and let \(X^m_{(0)}=0\) 
and \(X^m_{(m+1)}=1\), for \(m=n,n+1,\dotsc\)
The predictive density function is:
\[
f_{X_{m+1}\mid X_{1:m}}(x) =  
\frac{1}{m+1}\,\sum_{j=1}^{m+1}
\ind\left[X^m_{(j-1)}, X^m_{(j)}\right)(x)\,
\frac{1}{X^m_{(j)}-X^m_{(j-1)}},
\]
where \(\ind_A(x)=1\) if \(x\in A\) and 
\(\ind_A(x)=0\) if \(x\notin A\), for every \(x\in[0,1]\) and every measurable subset \(A\) of \([0,1]\). 
The predictive cumulative distribution function is
\begin{equation}\label{eq: pred}
%\begin{split}
\P(X_{m+1}\leq x \mid X_{1:m})
=\frac{1}{m+1}\,\sum_{j=1}^{m+1} 
\ind\left[X^m_{(j-1)}, X^m_{(j)}\right)(x)\,
\left\{j-1+\frac{x-X^m_{(j-1)}}{X^m_{(j)}-X^m_{(j-1)}}\right\}
%\end{split}
\end{equation}
for every \(x\in[0,1)\). Denote by \(F_m\) the empirical cumulative distribution function based on \(X_{1:m}\), namely,
\begin{equation}\label{eq: ecdf}
F_m(x)=\frac{1}{m}\sum_{i=1}^m \ind(X_i\leq x),
\end{equation}
for every \(x\in [0,1)\).
Noting that \(X^m_{(j-1)} \leq x < X^m_{(j)}\) if and only if 
the number of observations smaller than or equal to \(x\) among \(X_1,\dotsc, X_m\) is exactly \(j-1\), \eqref{eq: pred} becomes:
\begin{equation}\label{eq: pred2}
%\begin{split}
P(X_{m+1}\leq x \mid X_{1:m})
=\frac{m\,f_m(x)}{m+1}
+ \frac{1}{m+1}\sum_{j=1}^{m+1} 
\ind_{[X^m_{(j-1)}, X^m_{(j)})}(x)\,
\left\{\frac{x-X^m_{(j-1)}}{X^m_{(j)}-X^m_{(j-1)}}\right\}
%\end{split}
\end{equation}
for every \(x\in[0,1)\)

\begin{theorem}\label{th: main}
The sequence of random variables \(X_{n+1:\infty}=(X_{1},X_{2},\dotsc)\)	
considered above is asymptotically exchangeable.	
\end{theorem}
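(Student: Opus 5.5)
The plan is to reduce asymptotic exchangeability to the almost sure convergence of the predictive distribution functions. Write $G_m(x)=\P(X_{m+1}\le x\mid X_{1:m})$ for the predictive cdf in \eqref{eq: pred}. By the results recalled in the framework section (cf.\ \cite{Aldous85}), it suffices to show that, almost surely, $G_m$ converges weakly to some random distribution function $G$. Then the conditional law of $(X_{m+1},X_{m+2},\dotsc)$ given $X_{1:m}$ approaches the law of an i.i.d.\ sequence drawn from $G$. Mixing over $G$ gives an exchangeable limit $(Z_1,Z_2,\dotsc)$.

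The first step is to show that, for each fixed $x\in[0,1)$, the process $(G_m(x))_{m\ge n}$ is a bounded martingale, or at least a quasi-martingale, with respect to $\mathscr{F}_m=\sigma(X_{1:m})$. I would use the decomposition \eqref{eq: pred2}: $G_m(x)=\frac{m}{m+1}F_m(x)+\frac{1}{m+1}R_m(x)$, where $R_m(x)\in[0,1]$ is the linear interpolation term. Hence $|G_m(x)-F_m(x)|\le 2/(m+1)$, and it is enough to study $F_m(x)$. Since $\E[\ind(X_{m+1}\le x)\mid\mathscr{F}_m]=G_m(x)$,
\[
\E[F_{m+1}(x)\mid \mathscr{F}_m]-F_m(x)=\frac{G_m(x)-F_m(x)}{m+1},
\]
and the right-hand side is bounded in absolute value by $2/(m+1)^2$. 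This is summable. So $F_m(x)$ is a bounded quasi-martingale, and hence converges almost surely. The same then holds for $G_m(x)$, with a common limit $G(x)$.

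The second step is to upgrade pointwise convergence to convergence of distribution functions. Take $x$ in a countable dense set $D\subset[0,1]$, for instance the rationals, and intersect the full-measure events. This gives a.s.\ convergence of $G_m(q)$ for all $q\in D$ simultaneously. The limit is nondecreasing on $D$, and I extend it by right limits to a distribution function $G$ on $[0,1]$. Because all the $G_m$ are supported on the compact set $[0,1]$, no mass escapes, so $G$ is a proper distribution function. Monotonicity then gives $G_m\Rightarrow G$ a.s., and in fact uniform convergence if $G$ is continuous; I would note this in passing, since it does not seem essential.

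The final step is to verify the convergence of finite-dimensional laws of the shifted sequence. For fixed $k$ and continuity points $y_1,\dotsc,y_k$, I would write $\P(X_{m+1}\le y_1,\dotsc,X_{m+k}\le y_k\mid\mathscr{F}_m)$ by iterated conditioning as a nested expectation involving $G_m,G_{m+1},\dotsc,G_{m+k-1}$. Each of these is within $O(1/m)$ of $F_m$ uniformly, because $F_{m+j}$ and $F_m$ differ by at most $j/m$ in sup-norm. So this conditional probability equals $\prod_i G_m(y_i)+O(k^2/m)$, which tends to $\prod_i G(y_i)$ a.s. Dominated convergence then yields convergence to $\E\prod_i G(y_i)$, the law of a mixture of i.i.d.\ sequences.

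I expect the main obstacle to be the handling of the interpolation term $R_m$, and in particular making sure that it does not spoil the martingale structure. The quasi-martingale bound above is designed to absorb it. A secondary, technical concern is the treatment of atoms or discontinuity points of the limit $G$.
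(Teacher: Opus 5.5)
Your proof is correct, but it takes a genuinely different route at the key convergence step. The paper works with the empirical moments $M_{m,r}=\frac1m\sum_{j\le m}X_j^r$. It transfers the sandwich \eqref{eq: ineq} to the moment bound \eqref{eq: ineq3}, obtains $\E(M_{m+1,r}\mid X_{1:m})\le M_{m,r}+1/(m+1)^2$, and applies Robbins--Siegmund for each $r$. It then needs Lemma~\ref{lemma} (uniqueness in the moment problem on $[0,1]$ plus a subsequence argument) to turn convergence of all moments into weak convergence of $\mu_m$. The final step is simply a citation of Lemma~8.2 of \cite{Aldous85}.

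You run the same almost-supermartingale mechanism, driven by the same fact: the predictive and empirical cdfs differ by $O(1/m)$, so the drift is $O(1/m^2)$. The difference is that you use indicator test functions instead of monomials. This gives a.s.\ convergence of $F_m(x)$ at every fixed $x$, not just at the random continuity points of the limit. A Helly-type argument on a countable dense set then yields weak convergence directly, so Lemma~\ref{lemma} and the moment problem are not needed. What the moment route buys in return is that countably many scalar limits determine the limit measure at once, without building $G$ by right limits.

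Your last step is the iterated-conditioning estimate
\[
\Bigl|\P(X_{m+1}\le y_1,\dotsc,X_{m+k}\le y_k\mid\mathscr F_m)-\prod_{i=1}^k F_m(y_i)\Bigr|\le \frac{k^2}{m},
\]
which rests on $\sup_x|G_{m+j}(x)-F_m(x)|=O(j/m)$. It replaces the external appeal to Aldous with a self-contained, even quantitative, argument that exploits the specific structure of Hill's model.

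The atom issue you flag is harmless. You have a.s.\ convergence of $G_m(y)$ for every fixed $y$, and the limit always lies between $G(y^-)$ and $G(y)$. So it suffices to take each $y_i$ outside the countable set of atoms of the mean measure $\E G$: there $G$ a.s.\ has no atom, and the fixed-point limit equals $G(y_i)$.
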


The proof of the Theorem is based on the following Lemma whose proof is reported to make the present paper self-contained.
We write \(Y_m\stackrel{\textrm{d}}{\to}Y\), as \(m\to \infty\), if 
\(Y_{1:\infty}\) is a sequence of random variable that converges in distribution to \(Y\), being \(Y\) a random variable. 

\begin{lemma}\label{lemma}
If \(Y_{1:\infty}\) is a sequence of random variables valued into \([0,1]\) such that \(\lim_{m\to \infty} \mean(Y_m^r)=c_r \) for \(r=1,2,\dotsc\) and some sequence \(c_{1:\infty}\), 
then there exists a random variable \(Y\) such that 
\(Y_m\stackrel{\textrm{d}}{\to}Y\), as \(m\to \infty\), and \(\mean(Y^r)=c_r\), for \(r=1,2,\dotsc\).  
\end{lemma}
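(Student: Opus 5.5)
Since every \(Y_m\) takes values in the compact interval \([0,1]\), the sequence of laws of \((Y_m)\) is automatically tight. The plan is to combine this tightness with the fact that moments determine a distribution on a compact interval, and then argue through subsequences. In short, we use Prokhorov's theorem (Helly's selection theorem) together with the Weierstrass approximation theorem.

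\emph{Subsequential limits.} Let \(G_m\) denote the distribution function of \(Y_m\). By Helly's selection theorem, every subsequence of \((G_m)\) has a further subsequence converging weakly to the distribution function \(G\) of some probability measure. Because all the \(G_m\) live on \([0,1]\), which is compact, no mass escapes and \(G\) is also supported on \([0,1]\). Take such a subsequence \(m_k\) and let \(Y\) have distribution \(G\). The map \(y\mapsto y^r\) is bounded and continuous on \([0,1]\), so weak convergence gives
\[
\mean(Y^r)=\lim_k \mean(Y_{m_k}^r)=c_r .
\]
This holds for every \(r=1,2,\dotsc\), so every subsequential limit has moment sequence \(c_{1:\infty}\).

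\emph{Uniqueness of the limit.} Suppose \(\mu\) and \(\nu\) are two probability measures on \([0,1]\) with the same moments. Then they integrate every polynomial identically. Any continuous \(h\) on \([0,1]\) is a uniform limit of polynomials by the Weierstrass theorem, so \(\int h\,\d\mu=\int h\,\d\nu\) for all continuous \(h\). Hence \(\mu=\nu\), by the Riesz representation theorem or by approximating indicators of intervals with continuous functions. Consequently all subsequential weak limits coincide with a single law.

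\emph{Conclusion and main obstacle.} The standard subsequence argument now applies. If \(Y_m\) did not converge in distribution to \(Y\), there would be a bounded continuous \(h\), an \(\varepsilon>0\) and a subsequence along which \(|\mean h(Y_m)-\mean h(Y)|>\varepsilon\). A further subsequence would converge weakly to the unique limit \(Y\), which is a contradiction. So \(Y_m\stackrel{\textrm{d}}{\to}Y\) with \(\mean(Y^r)=c_r\). The only delicate point is the uniqueness step, i.e.\ that moments determine a law on \([0,1]\). This is exactly where compactness enters, and Weierstrass approximation handles it cleanly.
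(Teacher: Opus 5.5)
Your proposal is correct and takes essentially the same route as the paper: tightness from compactness, every subsequential weak limit has moments $c_r$, moments determine a law on $[0,1]$, and the subsequence criterion gives convergence of the whole sequence. The only difference is that you prove the two ingredients the paper cites, namely moment determinacy (via Weierstrass) and the subsequence criterion (which the paper takes from Billingsley's corollary to Theorem 25.10).
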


\begin{proof}[Proof of Lemma \ref{lemma}]
Let \(\nu_m\) be the probability distribution of \(Y_m\), for \(m=1,2,\dotsc\) Since each \(\nu_m\) is defined on \([0,1]\), which is compact, the sequence \(\nu_{1:\infty}\) is tight. If \(\nu_{m_k}\) is a subsequence that weakly converges to some probability measure \(\nu\) then by %the dominated convergence theorem, 
definition of weak convergence,
\[\lim_{k\to \infty} \int_0^1 x^r \d \nu_{m_k}(x)= 
\lim_{k\to \infty} \int_0^1 x^r \d \nu(x), \] 
but by hypothesis the same limit is equal to \(c_r\), for \(r=1,2,\dotsc\) 
Recalling that every probability distribution on \([0,1]\) is uniquely characterized by its moments, we have that every subsequence that weakly converges to some probability measure needs to converge to \(\nu\) and therefore \citep[see corollary to Theorem 25.10,][]{Billingsley95} \(\nu_m\) weakly converges to \(\nu\) and the proof is complete.

\end{proof}

\begin{proof}[Proof of Theorem \ref{th: main}]
Let \(\mu_m=\sum_{j=1}^m \delta_{X_j}/m\) be the empirical measure for \(m=1,2,\dotsc\) 
To begin with, let us prove that there exists a random probability measure \(P_\infty\) such that \(\mu_m \stackrel{\textrm{w}}{\to}P_\infty\), as \(m\to+\infty\), almost surely, where \(\stackrel{\textrm{w}}{\to}\) denotes weak convergence for probability measures.  
By \eqref{eq: pred2}, we have that
\begin{equation}\label{eq: ineq}
\frac{m}{m+1}F_m(x) \leq P(X_{m+1}\leq x \mid X_{1:m})
\leq \frac{m}{m+1}F_m(x) +\frac{1}{m+1},
\end{equation}
which in turn implies that:
\begin{equation}\label{eq: ineq2}
%\begin{split}
\frac{m}{m+1}(1-F_m(x)) \leq P(X_{m+1}> x \mid X_{1:m})
\leq \frac{m}{m+1}(1-F_m(x)) +\frac{1}{m+1},
%\end{split}
\end{equation}
Recalling that for any probability measure \(\nu\) on \([0,1]\) 
with cdf \(G_\nu\) we have that:
\[
\int_{0}^1 x^r \d \nu(x)= \int_0^1 (1-G_\nu(x^{1/r})) \d x,
\]
so inequality \eqref{eq: ineq2} yields:
\begin{equation}\label{eq: ineq3}
\frac{m}{m+1}\,M_{m,r} \leq \mean(X_{m+1}^r \mid X_{1:m})
\leq \frac{m}{m+1}\,M_{m,r} +\frac{1}{m+1},
\end{equation}
where:
\[
M_{m,r} = \int_{0}^1 x^r \d \mu_m(x)
= \frac{1}{m} \sum_{j=1}^m X_j^r,
\]
for \(r=1,2,\dotsc\)
At this stage, note that, for \(r=1,2,\dotsc\)
\begin{equation*}
M_{m+1,r}= \frac{m}{m+1}\,M_{m,r}+\frac{1}{m+1}\,X_{m+1}^r,
\end{equation*}
and therefore
\begin{equation*}
\mean(M_{m+1,r}\mid X_{1:m})
= \frac{m}{m+1}\,M_{m,r}+\frac{1}{m+1}\,\mean(X_{m+1}^r\mid X_{1:m}),
\end{equation*}
which by the second inequality in \eqref{eq: ineq3} yields:
%\begin{equation*}
$\mean(M_{m+1,r}\mid X_{1:m})
\leq M_{m,r}+1/(m+1)^2.$
%\end{equation*}
By the Robbins and Siegmund Theorem 
\citep{Robbins71}, the random sequence 
\(M_{m,r}\) converges to a random variable almost surely, 
as \(m\to \infty\), for \(r=1,2,\dotsc\) 
By Lemma \ref{lemma}, this implies that 
there exists a random probability measure \(P_\infty\) such that \(\mu_m \stackrel{\textrm{w}}{\to}P_\infty\). 
If we denote by \(F_\infty\) the (random) cdf of \(P_\infty\), 
we have that \(F_m(x)\) converges to \(F_\infty(x)\) for every continuity point \(x\) of \(F_\infty\), almost surely. 
By \eqref{eq: ineq}, we have that the same is true for 
the predictive cdf \(P(X_{m+1}\leq x \mid X_{1:m})\) and therefore 
\(P_m\stackrel{\textrm{w}}{\to}P_\infty\), almost surely, being \(P_m\) the predictive distribution of \(X_{1:\infty}\), namely:
$
P_m(B)=\P(X_{m+1}\in B \mid X_{1:m}),
$
for every $m$ and Borel subset \(B\) of \([0,1]\). 
	
Finally, recall that a sequence of random variables is asymptotically exchangeable if 
the corresponding sequence of predictive distributions weakly converges to a probability measure, almost surely \citep[see Lemma 8.2][]{Aldous85} and the proof is complete. 
\end{proof}

\section{Bivariate and multivariate extensions}

We start with the bivariate case. 
Let \(x_1,y_1, \dotsc, x_n, y_n \) be the data and assume that \(x_\ell,y_\ell \in [0,1]\) for \(\ell=1,\dotsc,n\). As before, for the theory, we can let \(X_1=x_1,Y_1=y_1, \dotsc, X_n=x_n,Y_n=y_n\) and we can use Hill's predictive model to assess the conditional distribution of \(X_{m+1}\) given \(X_{1:m}\) and of \(Y_{m+1}\) given \(Y_{1:m}\), for \(m=n,n+1,\dotsc\), but how to assess the (joint) conditional distribution of \((X_{m+1},Y_{m+1})\) given \(X_{1:m},Y_{1:m}\)? We use copulas.
 
In virtue of Hill's model, we are ensuring that the conditonal distribution functions \(F_m\) and \(G_m\) of \(X_{m+1}\) given \(X_{1:m}\) and of \(Y_{m+1}\) given \(Y_{1:m}\) are, respectively, continuous. We let \(X_{m+1}\) and \(Y_{1:m}\) be conditionally independent given \(X_{1:m}\) and \(Y_{m+1}\) and \(X_{1:m}\) be conditionally independent given \(Y_{1:m}\) so that \(F_m\) and \(G_m\) are the conditional distribution functions of 
\(X_{m+1}\) and of \(Y_{m+1}\) given \((X_{1:m},Y_{1:m})\), respectively. 
In this way, if we let \(U_{m+1}=F_{m}(X_{m+1})\) and 
\(V_{m+1}=G_m(Y_{m+1})\) then \(U_{m+1}\) and \(V_{m+1}\) are uniform random variables on \([0,1]\) conditionally upon \(X_{1:m},Y_{1:m}\). 

We can consider 
the Gaussian copula \(C_\rho\)  
%\begin{equation*}
$C_\rho(u,v)=\Phi_\rho(\Phi^{-1}(u),\Phi^{-1}(v))$	
%\end{equation*}
where \(\Phi\) is the standard univariate Gaussian distribution function and 
\(\Phi_\rho\) is the joint distribution function of a bivariate Gaussian random vector with standardized marginals and correlation coefficient \(\rho\). 
Let the distribution of \((U_{m+1},V_{m+1})\) be 
\(C_{R_m}\), where \(R_m\) is a function of \((X_{1:m},Y_{1:m})\). 
So, if we let \(\tilde{X}_{m+1}=\Phi^{-1}(U_{m+1})\) and \(\tilde{Y}_{m+1}=\Phi^{-1}(V_{m+1})\), where \(\Phi\) is the standardized Gaussian distribution,  then \((\tilde{X}_{m+1},\tilde{Y}_{m+1})\) is a Gaussian random bivariate vector with standardized marginals and correlation coefficient \(R_m\).

We are now going to see how it is possible to assess the random sequence 
\(R_{n},R_{n+1},\dotsc\) and consequently also the random sequence 
\begin{equation}\label{eq: seq}
(\tilde{X}_{n+1},\tilde{Y}_{n+1}), (\tilde{X}_{n+2},\tilde{Y}_{n+2}), \dotsc 
\end{equation}
As it will become evident, it is very useful to ensure that the sequence \(R_{n},R_{n+1},\dotsc\) converges almost surely. 
As we shall see, this can be done at least in two ways. 
For \(\ell=1,\dotsc,n\), let \(r_{1,\ell}\) and \(r_{2,\ell}\) be the rank of \(x_\ell\) among \(x_1,\dotsc,x_n\) and of \(y_\ell\) among  \(y_1,\dotsc,y_n\), respectively, namely:
\[r_{1,\ell}=\sum_{j=1}^n \ind_{[0,x_\ell]}(x_j),
\qquad
r_{2,\ell}=\sum_{j=1}^n \ind_{[0,y_\ell]}(y_j)
\]
so that \(F_n(x_\ell)=r_{1,\ell}/(n+1)\) and \(G_n(y_\ell)=r_{2,\ell}/(n+1)\). Moreover, let \(\tilde{x}_\ell=\Phi^{-1}(r_{1,\ell}/(n+1))\),   \(\tilde{y}_\ell=\Phi^{-1}(r_{2,\ell}/(n+1))\).

\begin{definition}
We shall say that the distribution of the random sequence 
\eqref{eq: seq} falls under \emph{model A} if 
for \(m=n,n+1,\dotsc\), conditionally upon 
\((\tilde{X}_{1:m},\tilde{Y}_{1:m})\) the random pair 
\((\tilde{X}_{m+1},\tilde{Y}_{m+1})\)
is a Gaussian random vector with standardized marginals and correlation coefficient \(R_m\), and moreover,
\begin{align}
S^2_{n,1}&=\frac{1}{n}\sum_{\ell=1}^{n}\tilde{x}_\ell^2 , \nonumber	\\
S^2_{m+1,1}&= 
\frac{m}{m+1}\,S^2_{m,1} + \frac{1}{m+1}\,S^2_{m,1}\,\tilde{X}_{m+1}^2, \label{eq: Sn1}\\
S^2_{n,2}&=\frac{1}{n}\sum_{\ell=1}^{n}\,\tilde{y}_\ell^2 ,	\nonumber \\
S^2_{m+1,2}&= 
\frac{m}{m+1}\,S^2_{m,2} + \frac{1}{m+1}\,S^2_{m,2}\,\tilde{Y}_{m+1}^2, \label{eq: Sn2}\\ 
S_n&=\frac{1}{n}\sum_{\ell=1}^{n}\tilde{x}_\ell\,\tilde{y}_\ell , \nonumber \\
S_{m+1}&= 
\frac{m}{m+1}\,S_m + \frac{1}{m+1}\,\sqrt{S^2_{m,1}\,S^2_{m,2}}\,\tilde{X}_{m+1}\,\tilde{Y}_{m+1}, 
\label{eq: Sm} \\
\label{eq: Rm}
R_m &= \frac{S_{m}}{\sqrt{S^2_{m,1}\,S^2_{m,2}}},
\end{align}
for \(m=n,n+1,\dotsc\)
\end{definition}

\vspace{0.1in}
\noindent
Note that under model A,  
\begin{equation*}
\begin{split}
S_m&=\frac{1}{m}\sum_{\ell=1}^{n}\tilde{x}_\ell\,\tilde{y}_\ell+
\frac{1}{m}\sum_{\ell=n+1}^{m}\sqrt{S^2_{\ell-1,1}\,S^2_{\ell-1,2}}
\tilde{X}_\ell\,\tilde{Y}_\ell\\
S^2_{m,1}&= 
\frac{1}{m}\sum_{\ell=1}^{n}\tilde{x}_\ell^2+
\frac{1}{m}\sum_{\ell=n+1}^{m}S^2_{\ell-1,1}
\tilde{X}^2_\ell\\
S^2_{m,2}&=
\frac{1}{m}\sum_{\ell=1}^{n}\tilde{y}_\ell^2+
\frac{1}{m}\sum_{\ell=n+1}^{m}S^2_{\ell-1,2}
\tilde{Y}^2_\ell
\end{split}
\end{equation*}
and therefore by the Cauchy--Schwartz inequality we have that 
\(S_m^2\leq S^2_{m,1}S^2_{m,2}\) so that \(R_m\in[-1,1]\) for \(m=n,n+1,\dotsc\).

\begin{definition}
We shall say that the distribution of the random sequence 
\eqref{eq: seq} falls under \emph{model B} if 
for \(m=n,n+1,\dotsc\), conditionally upon 
\((\tilde{X}_{1:m},\tilde{Y}_{1:m})\) the random pair 
\((\tilde{X}_{m+1},\tilde{Y}_{m+1})\)
is a Gaussian random vector with standardized marginals and correlation coefficient \(R_m\), and moreover,
\begin{align}
S^2_{n,1}&=\frac{1}{n}\sum_{\ell=1}^{n}\tilde{x}_\ell^2 , \nonumber  \\
S^2_{m+1,1}&= 
\frac{m}{m+1}\,S^2_{m,1} + \frac{1}{m+1}\,\tilde{X}_{m+1}^2, \nonumber \\
S^2_{n,2}&=\frac{1}{n}\sum_{\ell=1}^{n}\,\tilde{y}_\ell^2 ,	\nonumber \\
S^2_{m+1,2}&= 
\frac{m}{m+1}\,S^2_{m,2} + \frac{1}{m+1}\,\tilde{Y}_{m+1}^2, \nonumber \\ 
S_n&=\frac{1}{n}\sum_{\ell=1}^{n}\tilde{x}_\ell\,\tilde{y}_\ell , \nonumber \\
S_{m+1}&= 
\frac{m}{m+1}\,S_m + 
\frac{1}{m+1}\,\tilde{X}_{m+1}\,\tilde{Y}_{m+1}, 
\label{eq: modelB-cov} \\
R_m &= \frac{S_{m}}{\sqrt{S^2_{m,1}\,S^2_{m,2}}}, \nonumber
\end{align}
for \(m=n,n+1,\dotsc\)
\end{definition}

\vspace{0.1in}
\noindent
Under model B, if we let \(\tilde{X}_\ell=\tilde{x}_\ell\), 
\(\tilde{Y}_\ell=\tilde{y}_\ell\), for \(\ell=1,\dotsc,n\), then for \(m=n,n+1,\dotsc\)
\begin{align}\label{eq: S-modelB}
S_m&=\frac{1}{m}\sum_{\ell=1}^{m}\tilde{X}_\ell\,\tilde{Y}_\ell, & 
S^2_{m,1}&= 
\frac{1}{m}\sum_{\ell=1}^{m}\tilde{X}_\ell^2, &
S^2_{m,2}&=
\frac{1}{m}\sum_{\ell=1}^{m}\tilde{Y}_\ell^2.
\end{align}
Therefore, applying again the Cauchy-Schwartz inequality, one can see that 
\(R_m\in[-1,1]\), for \(m=n,n+1,\dotsc\)

To sum up, for \(m=n,n+1,\dotsc\) we generate two standardized Gaussian random variables \(\tilde{X}_{m+1}\) and \(\tilde{Y}_{m+1}\) with correlation coefficient \(R_m\), taking for instance 
\(\tilde{X}_{m+1}=Z_{m+1,1}\) and 
\(\tilde{Y}_{m+1}=R_m Z_{m+1,1} + \sqrt{1-R_m^2}Z_{m+1,2}\), where 
\(Z_{m+1,1}\) and \(Z_{m+1,2}\) are iid standardized Gaussian random variables, 
 then let \(U_{m+1}=\Phi(\tilde{X}_{m+1})\), 
\(V_{m+1}=\Phi(\tilde{Y}_{m+1})\), and finally let \(X_{m+1}=F_m^{-1}(U_{m+1})\) and \(Y_{m+1}=G_m^{-1}(V_{m+1})\), where
\[
F^{-1}_m(u)=\sum_{\ell=1}^{n+1}
\ind_{\left(\frac{\ell-1}{m+1}, \frac{\ell}{m+1}\right]}(u)
\, \{X^{m}_{(\ell-1)}+(X^{m}_{(\ell)}-X^m_{(\ell-1)})
((m+1)u-\ell+1)\},
\]
\(X^{m}_{(1)},\dotsc,X^{m}_{(m)}\) are the order statistics of \(X_{1:m}\), \(X^{m}_{(0)}=0\), 
\(X^{m}_{(m+1)}=1\), 
and a similar expression holds for \(G^{-1}_m\).

Here are some comments about the two models A and B.  
For both models, the aim is to define a sequence of Gaussian random pairs \eqref{eq: seq} with standardized marginals,  \(R_m\) is the correlation coefficient of the \((m+1)\)-th pair \((\tilde{X}_{m+1},\tilde{Y}_{m+1})\) given the past and is equal to the estimated correlation coefficient on the basis of the past observations. 

Under model A, we generate a sequence of Gaussian random pairs \linebreak \((X'_{n+1},Y'_{n+1}),(X'_{n+2},Y'_{n+2}),\dotsc \) with zero mean, but variances  
not necessarily equal to one. %For each coordinate, this variance is the second moment of the past observations. 
The conditional correlation coefficient of \((X'_{m+1},Y'_{m+1})\) given the past is \(R_m\) and 
the conditional variances of \(X'_{m+1}\) and of \(Y'_{m+1}\) given the past are 
\[
S_{m,1}^2=\frac{1}{m}\sum_{\ell=1}^m (X_\ell')^2, \quad 
S_{m,2}^2=\frac{1}{m}\sum_{\ell=1}^m (Y_\ell')^2,
\quad 
\]
respectively, where \(X_\ell'=\tilde{x}_\ell\) and 
\(Y_\ell'=\tilde{y}_\ell\), for \(\ell=1,\dotsc,n\). 
Each element of this sequence of Gaussian random pairs is divided by the corresponding standard deviation to obtain a sequence of Gaussian random pairs with standardized marginals, namely:
\[
\widetilde{X}_{m+1}= X_{m+1}'/\sqrt{S_{m,1}^2}, 
\quad 
\widetilde{Y}_{m+1}= Y_{m+1}'/\sqrt{S_{m,2}^2}, 
\]
for \(m=n,n+1,\dotsc\)
%and \(R_m\) is the corresponding correlation coefficient. 

Under model B, the sequence of Gaussian random pairs \eqref{eq: seq} with standardized marginals is generated directly.
Simulations show that there is no substantial difference between these two models.

As anticipated, it is fundamental to ensure that the sequence  \(R_m\) converges, almost surely. Indeed, we can prove the following result:
\begin{proposition}\label{prop: convRm}
For model A and B, there exist four random variables \linebreak \(S_{\infty},S^2_{\infty,1},S^2_{\infty,2},R_\infty\)
such that
\begin{align*}
\lim_{m\to \infty} S^2_{m,1}&=S^2_{\infty,1},& 
\lim_{m\to \infty} S^2_{m,1}&=S^2_{\infty,1},\\
\lim_{m\to \infty} S_{m}&=S_{\infty}, &
\lim_{m\to \infty} R_{m}&=R_\infty=S_\infty/\sqrt{S^2_{\infty,1}S^2_{\infty,2}},
\end{align*}
almost surely. 
\end{proposition}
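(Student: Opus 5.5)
The plan is to treat the two models separately, since the recursions behave quite differently. Throughout, let $\mathscr{F}_m$ be the $\sigma$-field generated by $(\tilde{X}_{1:m},\tilde{Y}_{1:m})$. The one fact used repeatedly is that, given $\mathscr{F}_m$, $\tilde{X}_{m+1}$ and $\tilde{Y}_{m+1}$ are standard Gaussian with
\[
\mean(\tilde{X}_{m+1}\tilde{Y}_{m+1}\mid \mathscr{F}_m)=R_m .
\]

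\emph{Model A.} By \eqref{eq: Sn1},
\[
\mean(S^2_{m+1,1}\mid\mathscr{F}_m)=S^2_{m,1}\,\frac{m+1}{m+1}=S^2_{m,1}.
\]
So $(S^2_{m,1})_{m\ge n}$ is a nonnegative martingale and converges almost surely by Doob's theorem; the same holds for $S^2_{m,2}$.

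By \eqref{eq: Sm} and \eqref{eq: Rm},
\[
\mean(S_{m+1}\mid\mathscr{F}_m)=\frac{m}{m+1}S_m+\frac{1}{m+1}\sqrt{S^2_{m,1}S^2_{m,2}}\,R_m=S_m ,
\]
so $(S_m)$ is also a martingale. From the Cauchy--Schwarz bound already noted,
\[
|S_m|\le \sqrt{S^2_{m,1}S^2_{m,2}}\le \tfrac12\,(S^2_{m,1}+S^2_{m,2}),
\]
whose expectation is the constant $\tfrac12(S^2_{n,1}+S^2_{n,2})$. Hence $(S_m)$ is bounded in $L^1$ and converges almost surely.

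To define $R_\infty$ we need $S^2_{\infty,i}>0$. Write
\[
\log S^2_{m,1}=\log S^2_{n,1}+\sum_{\ell=n}^{m-1}\log\Bigl(1+\frac{\tilde{X}^2_{\ell+1}-1}{\ell+1}\Bigr).
\]
The increments $(\tilde{X}^2_{\ell+1}-1)/(\ell+1)$ are martingale differences with variances $2/(\ell+1)^2$, which are summable, so their sum converges almost surely. Because these increments tend to $0$, the second-order Taylor remainder is bounded by a multiple of their squares, which is summable. The increments are also $>-1$ (as $\ell+1\ge 2$), so every log term is finite. Therefore $\log S^2_{m,1}$ converges to a finite limit and $S^2_{\infty,1}>0$. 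Continuity then gives $R_m\to S_\infty/\sqrt{S^2_{\infty,1}S^2_{\infty,2}}$.

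\emph{Model B.} Since the conditional law of $\tilde{X}_{m+1}$ given $\mathscr{F}_m$ is $N(0,1)$ and does not depend on $\mathscr{F}_m$, the variables $\tilde{X}_{n+1},\tilde{X}_{n+2},\dotsc$ are i.i.d.\ standard Gaussian, and likewise the $\tilde{Y}$'s. By \eqref{eq: S-modelB} and the strong law of large numbers, $S^2_{m,1}\to1$ and $S^2_{m,2}\to1$ almost surely. The law of the iterated logarithm for $\tilde{X}^2_\ell$ gives the rate
\[
\varepsilon_m:=\frac{1}{\sqrt{S^2_{m,1}S^2_{m,2}}}-1=O\bigl(\sqrt{\log\log m/m}\bigr)\quad\text{a.s.}
\]
For $S_m$, set $D_{m+1}=\tilde{X}_{m+1}\tilde{Y}_{m+1}-R_m$. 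These are martingale differences with $\mean(D^2_{m+1}\mid\mathscr{F}_m)\le \mean(\tilde{X}^2\tilde{Y}^2)\le 3$. Rewriting \eqref{eq: modelB-cov},
\[
S_{m+1}=S_m+\frac{R_m-S_m}{m+1}+\frac{D_{m+1}}{m+1}
= S_m+\frac{S_m\,\varepsilon_m}{m+1}+\frac{D_{m+1}}{m+1}.
\]
The series $\sum D_{m+1}/(m+1)$ is an $L^2$-bounded martingale and converges almost surely. The drift series converges absolutely almost surely, because $|S_m|\le\sqrt{S^2_{m,1}S^2_{m,2}}$ is eventually bounded and $\sum \sqrt{\log\log m}/m^{3/2}<\infty$. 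Hence $S_m$ converges almost surely, and then $R_m\to S_\infty$, which equals $S_\infty/\sqrt{S^2_{\infty,1}S^2_{\infty,2}}$ since $S^2_{\infty,1}=S^2_{\infty,2}=1$.

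The main obstacle is the convergence of $S_m$ in model B: it is not a martingale, and the drift only vanishes at a rate. The plan handles this by obtaining a summable rate from the law of the iterated logarithm. If that proves awkward, the fallback is a Robbins--Siegmund argument \citep{Robbins71} applied to $S_m+\sum_{\ell<m}|S_\ell\varepsilon_\ell|/(\ell+1)$. A secondary point, for model A only, is checking that the limiting variances are strictly positive, which the logarithmic argument above covers.
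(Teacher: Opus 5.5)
Your proof is correct, and in both models it takes a different route from the paper.

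\emph{Model A.} The paper also uses the martingale property of $S^2_{m,1}$, $S^2_{m,2}$ and $S_m$, but it gets convergence of $S_m$ from an $L^2$ bound. It bounds $\var(S_{m+1}\mid\mathscr{G}_m)$ by $2S^2_{m,1}S^2_{m,2}/(m+1)^2$, then controls $\E(S^2_{m,1}S^2_{m,2})$ through the product recursion and $\prod_\ell(1+\ell^{-2})<\infty$. Your domination $|S_m|\le\frac12(S^2_{m,1}+S^2_{m,2})$ gives $L^1$-boundedness in one line. The paper's longer computation buys $L^2$-boundedness, hence also $L^2$ convergence. Your logarithmic argument that $S^2_{\infty,i}>0$ covers a point the paper skips when it says convergence of $R_m$ ``trivially follows''. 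Without it, the formula for $R_\infty$ could involve division by zero.

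\emph{Model B.} The paper applies Robbins--Siegmund to $S_m^2$ via $\E(S^2_{m+1}\mid\mathscr{G}_m)\le(1+B_m)S_m^2+3/(m+1)^2$. It then shows $|S_{m+1}-S_m|\to0$ by Markov and Borel--Cantelli, and argues that the sign of $S_m$ stabilises whenever $|S_m|\to W>0$. Your decomposition into an $L^2$-bounded martingale plus an absolutely summable drift is more direct and avoids the sign argument. It also isolates what the paper's argument actually needs. In your notation $B_m=(2m\varepsilon_m-1)/(m+1)^2$, so the paper's step from $B_m\sim-1/(m+1)^2$ to ``$B_m<0$ eventually'' is not justified. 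The correction $2m\varepsilon_m/(m+1)^2$ is of order $|\varepsilon_m|/m$, which is not negligible against $1/(m+1)^2$. What Robbins--Siegmund really requires is $\sum_m(B_m\vee0)<\infty$. That follows from $B_m^+\le2\varepsilon_m^+/(m+1)$ together with exactly your iterated-logarithm rate.
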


\begin{proof}

Let us start with model A. 
Let \(\mathscr{G}_m\) be the sigma-algebra generated by \(X_{1:m},Y_{1:m}\), for \(m=n+1,n+2,\dotsc\).
The random sequences \(S^2_{m,1}\), \(S^2_{m,2}\) and \(S_m\) are all martingales with respect to the filtration \(\mathscr{G}_m\). 
Since \(S^2_{m,1}\) and \(S^2_{m,2}\) are also nonnegative they converge almost surely. Almost sure convergence of \(R_m\) trivially follows from convergence of 
\(S^2_{m,1}\), \(S^2_{m,2}\) and of \(S_m\). So, all we need to show is convergence 
of \(S_m\). 
%The random sequence \(S_{n},S_{n+1},\dotsc\) is a martingale since 
%\begin{equation*}
%\E(S_{m+1}\mid \mathscr{G}_m)=S_m
%\end{equation*}
%by \eqref{eq: Sm}. 
To this aim, it is sufficient to prove that \(\var(S_m)\leq c\) for some constant \(c\) and \(m=n+1,n+2,\dotsc\). 
By \eqref{eq: Sm}, 
\begin{equation*}
\begin{split}
\var(S_{m+1}\mid  \mathscr{G}_m)&= 
\frac{S^2_{m,1}\,S^2_{m,2}}{(m+1)^2}\,\var(\tilde{X}_{m+1}\,\tilde{Y}_{m+1}\mid \mathscr{G}_m)\\
&= \frac{S^2_{m,1}\,S^2_{m,2}}{(m+1)^2}\, \var(R_m Z_{m+1,1}^2 + \sqrt{1-R_m^2}Z_{m+1,1}Z_{m+1,2}\mid \mathscr{G}_m)\\
&= \frac{S^2_{m,1}\,S^2_{m,2}}{(m+1)^2}\, (1+R^2_m)\\
&\leq \frac{2\,S^2_{m,1}\,S^2_{m,2}}{(m+1)^2}
%&= \frac{S^2_{m,1}\,S^2_{m,2}+S_m^2}{(m+1)^2}
\end{split}
\end{equation*}
and therefore:
\begin{equation}\label{eq: var_Sm}
\begin{split}
\var(S_{m+1}) &= 
\var(\E(S_{m+1}\mid  \mathscr{G}_m)) + \E(\var(S_{m+1}\mid  \mathscr{G}_m))\\
&\leq \var(S_m)+  2\,\E(S^2_{m,1}\,S^2_{m,2})/(m+1)^2.
\end{split}
\end{equation}
%since \(E(S_m)=S_n\) for every \(m=n+1,n+2,\dotsc\)
At this stage we need to compute \(\E(S^2_{m,1}\,S^2_{m,2})\) or find an upper bound for it. 
By the recursion formulas given in \eqref{eq: Sn1} and \eqref{eq: Sn2} we have: 
\begin{equation}\label{eq: stepvar1}
\begin{split}
S^2_{m+1,1}\,S^2_{m+1,2} 
&=\frac{S^2_{m,1}\,S^2_{m,2}}{(m+1)^2}\,
(m+\tilde{X}_{m+1}^2)(m+\tilde{Y}_{m+1}^2)\\
&=\frac{S^2_{m,1}\,S^2_{m,2}}{(m+1)^2}\,
\left\{m^2+m(\tilde{X}_{m+1}^2+\tilde{Y}_{m+1}^2)+
\tilde{X}_{m+1}^2\tilde{Y}_{m+1}^2\right\}.
\end{split}
\end{equation}
At this stage, note that 
\begin{equation*}
\E(\tilde{X}_{m+1}^2\,\tilde{Y}_{m+1}^2\mid \mathcal{G}_m)= 
\var(\tilde{X}_{m+1}\,\tilde{Y}_{m+1}\mid \mathcal{G}_m)=
1+R^2_m\leq 2,
\end{equation*}
and therefore \eqref{eq: stepvar1} yields:
\begin{equation*}
\E(S^2_{m+1,1}\,S^2_{m+1,2}\mid \mathcal{G}_m) 
\leq \left\{1+1/(m+1)^2\right\}\,S^2_{m,1}\,S^2_{m,2}
\end{equation*}
and therefore
%\begin{equation*}
$\E(S^2_{m+1,1}\,S^2_{m+1,2}) 
\leq \left\{1+1/(m+1)^2\right\}\,\E(S^2_{m,1}\,S^2_{m,2})$
%\end{equation*}
so that
\begin{equation}\label{eq: stepvar2}
\E(S^2_{m,1}\,S^2_{m,2}) \leq S^2_{n,1}\,S^2_{n,2}\prod_{\ell=n+1}^{m}
\left(1+\frac{1}{\ell^2}\right)
\leq S^2_{n,1}\,S^2_{n,2}\prod_{\ell=n+1}^{\infty}
\left(1+\frac{1}{\ell^2}\right)
\end{equation}
Note that
\begin{equation*}
%\begin{split}
\prod_{\ell=n+1}^{\infty}\left(1+\frac{1}{\ell^2}\right) 
\leq \exp\left\{\sum_{\ell=n+1}^{\infty} 
\log\left(1+\frac{1}{\ell^2}\right)\right\} 
\leq \exp\left\{\sum_{\ell=n+1}^{\infty} \frac{1}{\ell^2}\right\}
<\infty. 
%\end{split}
\end{equation*}
Let 
$
c= S^2_{n,1}\,S^2_{n,2}\prod_{\ell=n+1}^{\infty}
\left(1+\frac{1}{\ell^2}\right).
$
A combination of \eqref{eq: var_Sm} and \eqref{eq: stepvar2} yields
%\begin{equation*}
$\var(S_{m+1}) \leq \var(S_{m}) +2c/(m+1)^2$
%\end{equation*}
and therefore
\begin{equation*}
\begin{split}
\var(S_m)\leq  \sum_{\ell=n+1}^{m}\frac{2c}{\ell^2} \leq 
\sum_{\ell=n+1}^{\infty}\frac{2c}{\ell^2},
\end{split}	
\end{equation*}
for \(m=n+1,n+2, \dotsc\), 
and the proof is complete for model A.

Let us now consider model B. 
Recalling \eqref{eq: S-modelB}, the random sequence \(S^2_{m,1}\) converges almost surely to one (as \(m\to \infty\)) by the strong law of large numbers since \(\tilde{X}_{n+1},\tilde{X}_{n+2}, \dotsc\) are iid random variables. The same result holds for 
\(S^2_{m,2}\). 
So, \(S^2_{\infty,1}=S^2_{\infty,2}=1\), almost surely. 
As a consequence, we just need to prove almost sure convergence of \(S_m\). To this aim we first prove almost sure convergence of the square \(S_m^2\). 
By \eqref{eq: modelB-cov}, we have
\begin{equation*}
S_{m+1}^2= 
\left(\frac{m}{m+1}\right)^2\,S_m^2 +
\frac{2m}{m+1}\,S_m\,\tilde{X}_{m+1}\,\tilde{Y}_{m+1}
+ \left(\frac{1}{m+1}\right)^2\,
\tilde{X}_{m+1}^2\,\tilde{Y}_{m+1}^2.   
\end{equation*}
Being \(\E(\tilde{X}_{m+1}\tilde{Y}_{m+1}\mid 
\mathscr{G}_m)=R_m\) and 
\(\E(\tilde{X}^2_{m+1}\tilde{Y}^2_{m+1}\mid \mathscr{G}_m)=1+2R_m^2\), we get
\begin{equation}\label{eq: proof-modelB}
\begin{split}
\E(&S_{m+1}^2\mid \mathscr{G}_m)\\
&= \left(\frac{m}{m+1}\right)^2S_m^2+
\frac{2m}{(m+1)^2}S_mR_m
+\frac{1}{(m+1)^2}(1+2R_m^2)\\
&= \left(\frac{m}{m+1}\right)^2S_m^2+
\frac{2m}{(m+1)^2}\frac{S_m^2}{\sqrt{S_{m,1}^2S_{m,2}^2}}
+\frac{1}{(m+1)^2}(1+2R_m^2)\\
& \leq 
(1+B_m)\, S_m^2 +3/(m+1)^2,
\end{split} 
\end{equation}
where
\begin{equation*}
B_m =  \left(\frac{m}{m+1}\right)^2 
\left(1+\frac{2}{m\sqrt{S_{m,1}^2S_{m,2}^2}}\right)-1 .  
\end{equation*}
Since both \(S_{m,1}^2\) and \(S_{m,2}^2\) converge to one, 
almost surely, 
\[
B_m \sim  \left(\frac{m}{m+1}\right)^2
\left(1 +  \frac{2}{m} \right)-1
=  \frac{m(m+2)}{(m+1)^2}-1=-\frac{1}{(m+1)^2}<0, 
\] 
as \(m\to \infty\), almost surely. 
This implies that \(B_m <0\) for large \(m\), almost surely, 
which in turn implies that 
$\sum_m (B_m\vee 0) <\infty, $
almost surely, and by \eqref{eq: proof-modelB}, 
\[
\E(S_{m+1}^2\mid \mathscr{G}_m) 
\leq  \{1+(B_m\vee 0)\}S_m^2 + 3/(m+1)^2.
\] 
We can now apply Robbins-Siegmund Theorem \citep{Robbins71} to obtain that \(S^2_m\) converges almost surely. 
At this stage, note that \eqref{eq: modelB-cov} yields
%\begin{equation*}
$ S_{m+1} - S_{m} = 
(-\,S_m +\tilde{X}_{m+1}\,\tilde{Y}_{m+1})/(1+m)$,  
%\end{equation*}
and therefore:
\begin{equation}\label{eq: almost-final}
\lvert S_{m+1} - S_{m} \rvert \leq  
\frac{1}{m+1}\,\lvert S_m \rvert + 
\frac{1}{m+1}\, \lvert \tilde{X}_{m+1}\,\tilde{Y}_{m+1} \rvert. \end{equation}
Since \(\E(\tilde{X}^2_{m+1}\tilde{Y}^2_{m+1}\mid \mathscr{G}_m)=1+2R_m^2\), by Markov's inequality, for every \(\varepsilon>0\), we have
\begin{equation*}
\P\left(
\frac{\lvert \tilde{X}_{m+1}\,\tilde{Y}_{m+1} \rvert}{m+1}  >\varepsilon\right) \leq \frac{\E(\E(\tilde{X}^2_{m+1}\tilde{Y}^2_{m+1}\mid \mathscr{G}_m))}{(m+1)^2}\leq \frac{3}{(m+1)^2}
\end{equation*}
so that by the First Borel--Cantelli Lemma, 
\(\lvert \tilde{X}_{m+1}\,\tilde{Y}_{m+1} \rvert/(m+1)\) converges to zero, almost surely. Moreover, since \(S_m^2\) converges, almost surely, so does \(\lvert S_m \rvert\) and therefore by \eqref{eq: almost-final}, 
\begin{equation} \label{eq: lim-proof}
\lim_{m\to \infty} \lvert S_{m+1} - S_{m} \rvert =0, 
\end{equation}
almost surely. 
The sequence \(\lvert S_m \rvert\) converges to a nonnegative random variable, say \(W\). If \(W=0\), then \(S_m\) almost surely converges to zero, as well. If \(W>0\), then for every \(\varepsilon>0\),  
\begin{equation*}%\label{eq: almost-there}
S_m \in [-W-\varepsilon, -W+\varepsilon]\cup 
[W-\varepsilon, W+\varepsilon],
\end{equation*}
for large \(m\), almost surely. But by \eqref{eq: lim-proof}, 
$\lvert S_{m+1} - S_{m} \rvert \leq W-\varepsilon,
$
for large \(m\), almost surely. So, if \(S_m\leq -W+\varepsilon\) then \(S_{m+1} \leq 0\), and if 
\(S_m\geq W-\varepsilon\) then \(S_{m+1} \geq 0\), for large \(m\), almost surely. 
This implies that the sign of \(S_m\) converges and given that 
\(\lvert S_m \rvert\) converges, we have that 
the sequence \(S_m\) converges, almost surely, and the proof is complete. 
\end{proof}

\vspace{0.1in}
\noindent
We are now ready to prove the following:
\begin{theorem}\label{th: bivariate-hill}
The random sequence \((X_{1},Y_{1}), (X_{2},Y_{2}),\dotsc\) 
considered above is asymptotically exchangeable under model A and under model B. 
\end{theorem}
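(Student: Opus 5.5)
We will again use the criterion from \citet[Lemma 8.2]{Aldous85} invoked in the proof of Theorem \ref{th: main}: it suffices to show that, almost surely, the one-step predictive distribution $Q_m$ of $(X_{m+1},Y_{m+1})$ given $(X_{1:m},Y_{1:m})$ converges weakly to some random probability measure $Q_\infty$ on $[0,1]^2$. By construction, $Q_m$ is the law of the pair
\[
\bigl(F_m^{-1}(\Phi(\tilde{X}_{m+1})),\,G_m^{-1}(\Phi(\tilde{Y}_{m+1}))\bigr),
\]
where $(\tilde{X}_{m+1},\tilde{Y}_{m+1})$ is standard bivariate Gaussian with correlation $R_m$. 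Its joint cdf is therefore
\[
Q_m\bigl([0,x]\times[0,y]\bigr)=C_{R_m}\bigl(F_m(x),G_m(y)\bigr),
\]
where $F_m$ and $G_m$ are the continuous Hill predictive cdfs of the two coordinates.

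\textbf{Step 1: the marginals.} Under our construction, $X_{m+1}$ given the past depends only on $X_{1:m}$ through $F_m$. Marginally, $X_{1:\infty}$ is exactly Hill's univariate scheme: $U_{m+1}=\Phi(\tilde X_{m+1})$ is uniform, so $X_{m+1}=F_m^{-1}(U_{m+1})$ has cdf $F_m$ conditionally on the full past. The computation in the proof of Theorem \ref{th: main} used only the conditional distribution of $X_{m+1}$ given the past. The inequalities \eqref{eq: ineq}--\eqref{eq: ineq3} and the Robbins--Siegmund argument therefore go through with respect to the filtration $\mathscr{G}_m$. They give, almost surely, $F_m(x)\to F_\infty(x)$ at continuity points of a random cdf $F_\infty$, and likewise $G_m\to G_\infty$.

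\textbf{Step 2: the copula parameter.} By Proposition \ref{prop: convRm}, $R_m\to R_\infty$ almost surely, under both model A and model B.

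\textbf{Step 3: combine.} The map $(\rho,u,v)\mapsto C_\rho(u,v)$ is jointly continuous on $[-1,1]\times[0,1]^2$. This holds because $\Phi_\rho$ is continuous in $\rho$, including the degenerate endpoints $\rho=\pm1$, where it equals the Fréchet bounds, and every copula is $1$-Lipschitz in $(u,v)$. Hence
\[
C_{R_m}\bigl(F_m(x),G_m(y)\bigr)\to C_{R_\infty}\bigl(F_\infty(x),G_\infty(y)\bigr)
\]
for all $(x,y)$ with $x$ a continuity point of $F_\infty$ and $y$ a continuity point of $G_\infty$. The limit is a bivariate cdf on $[0,1]^2$, since $C_{R_\infty}$ is a copula composed with cdfs. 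Let $Q_\infty$ be the corresponding measure. Its set of continuity points contains those $(x,y)$ as above, up to a countable set of lines. Convergence of the cdfs on a dense set of continuity points, with all measures supported on the compact $[0,1]^2$, yields $Q_m\stackrel{\textrm{w}}{\to}Q_\infty$ almost surely. Aldous's lemma then gives asymptotic exchangeability.

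\textbf{Main obstacle.} The delicate point is Step 1. We must verify that the univariate argument survives the enlarged filtration, since the $Y$'s influence the $X$'s through the coupling $U_{m+1}$. It does, because the conditional law of $X_{m+1}$ given $\mathscr{G}_m$ is still $F_m$. We must also handle the endpoints $R_\infty=\pm1$ in the continuity claim for $C_\rho$. A cleaner way to handle the endpoints is to use the representation $\tilde Y=R Z_1+\sqrt{1-R^2}Z_2$: a.s. convergence of $R_m$ gives convergence of the Gaussian pair in law, and then the continuous mapping theorem applies through $\Phi$.
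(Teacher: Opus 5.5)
Your proposal is correct and follows the paper's proof. You write the predictive cdf as $C_{R_m}(F_m(x),G_m(y))$, take $R_m\to R_\infty$ from Proposition~\ref{prop: convRm} and the marginal convergence from Theorem~\ref{th: main}, combine them, and conclude with Lemma 8.2 of \cite{Aldous85}. The only differences are that you prove the combination step directly (joint continuity of $(\rho,u,v)\mapsto C_\rho(u,v)$ including $\rho=\pm1$, plus the cdf criterion for weak convergence) where the paper cites Theorem 2 of \cite{Sempi04}, and you make explicit why Theorem~\ref{th: main} still applies to each marginal under the enlarged filtration.
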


\begin{proof}[Proof of Theorem \ref{th: bivariate-hill}]
	
In our framework, the conditional cdf of \((X_{m+1},Y_{m+1})\) given 
\(X_{1:m},Y_{1:m}\) is:
\begin{equation}\label{eq: jointpred} 
\P(X_{m+1}\leq x,Y_{m+1}\leq y \mid  X_{1:m},Y_{1:m}) = C_{R_m}(F_m(x), G_m(y))	
\end{equation}
By Proposition \ref{prop: convRm}, \(R_m\) converges almost surely to \(R_\infty\) as \(m\to \infty\). 
Therefore, the sequence of copulas \(C_{R_m}\) pointwise converges to \(C_{R_\infty}\), almost surely. 
By Theorem \ref{th: main}, we have that \(F_m\) and \(G_m\) weakly converge to 
a random cdf, almost surely. Let \(F_\infty\) and \(G_\infty\) be the limit of 
\(F_m\) and \(G_m\), respectively. 
By Theorem 2 in \cite{Sempi04} we have that the cdf in \eqref{eq: jointpred} weakly converges to \(C_{R_\infty}(F_\infty(x),G_\infty(y))\). 
The thesis follows from Lemma 8.2 of \cite{Aldous85}. 
\end{proof}

\vspace{0.1in}
\noindent
It is not difficult to see how the proposed scheme can be extended from the bivariate case to the multivariate case. 
We shall show how to extend model A. Mutatis mutandis, model B can be extended to the multivariate case, as well. 
Let \(\bx_1,\dotsc,\bx_n\) be our data, where \(\bx_\ell=(x_{\ell,1},\dotsc,x_{\ell,d})\in [0,1]^d\), for \(\ell=1,\dotsc,n\) 
and some positive integer \(d\). As before, let 
\(X_{\ell,k}=x_{\ell,k}\), for \(\ell=1,\dotsc,n\) and \(k=1,\dotsc,d\). 
Let \(F_{m,k}\) be the marginal cdf of \(X_{m+1,k}\) given \(X_{1,k},\dotsc,X_{m,k}\), for \(k=1,\dotsc,d\) obtained using Hill's predictive model. For \(k=1,\dotsc,d\), let \(\tilde{x}_{\ell,k}=\Phi^{-1}(r_{k,\ell}/(n+1))=\Phi^{-1}(F_{n,k}(x_{\ell,k}))\), 
where \(r_{k,\ell}\) is the rank of \(x_{\ell,k}\) among 
\(x_{1,k},\dotsc,x_{n,k}\). Moreover, for \(k,h=1,\dotsc,d\), we can extend model A to the multivariate case as follows:
\begin{equation*}
\begin{split}
S_{n,k,h}&=\frac{1}{n}\sum_{\ell=1}^{n}\tilde{x}_{\ell,k}\tilde{x}_{\ell,h} ,	\\
S_{m+1,k,h}&= 
\frac{m}{m+1}\,S_{m,k,h} + \frac{1}{m+1}\,
\sqrt{S_{m,k,k}S_{m,h,h}}\tilde{X}_{m+1,k}\tilde{X}_{m+1,h}, \\
R_{m,k,h}&=\frac{S_{m,k,h}}{\sqrt{S_{m,k,k}\,S_{m,h,h}}}
\end{split}
\end{equation*}
where \((\tilde{X}_{m+1,1}, \dotsc,\tilde{X}_{m+1,d})\) is a multivariate Gaussian random vector with standardized marginals and correlation matrix 
$\bR_m=(R_{m,k,h})_{k,h=1,\dotsc,d},$
and 
\(X_{m+1,k}= F^{-1}_{m,k}(\Phi(\tilde{X}_{m+1,k}))\),  
for \(k=1,\dotsc,d\) and \(m=n,n+1,\dotsc \).
For model A, the random variables \(\tilde{X}_{m+1,k}\), for \(k=1,\dotsc,d\) and \(m=n,n+1,\dotsc\), can be generated as follows:
\begin{equation*}
\tilde{X}_{m+1,k} = 
S_{m,k,k}^{-1/2}\left\{ 
\frac{1}{m} \sum_{\ell=1}^n \tilde{x}_{\ell,k}\,Z_{m,\ell}
+ \frac{1}{m} \sum_{\ell=n+1}^m \sqrt{S_{\ell-1,k,k}}\tilde{X}_{\ell,k}\,Z_{m,\ell}
\right\},
\end{equation*}
where \(Z_{m,\ell}\) (\(\ell=1,\dotsc,m,\), \(m = n,n+1,\dotsc\)) are i.i.d. standardized Gaussian random variables.  
The following Proposition ensures positive definiteness of the random matrices 
\(\bR_{n+1},\bR_{n+2},\dotsc \) 

\begin{proposition}
If the matrix \((S_{n,k,h})_{k,h=1,\dotsc,d}\) is not singular and moreover 
\(\ba = (a_1,\dotsc,a_d)\in \BR^d\setminus \{(0,\dotsc,0)\} \) then 
\(\ba^T \bR_m \ba >0\), almost surely, for every \(m=n+1,n+2,\dotsc\)..
\end{proposition}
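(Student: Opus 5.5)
We prove positive definiteness of the matrix \(\bS_m=(S_{m,k,h})_{k,h}\) and then pass to \(\bR_m\). Since \(\bR_m=\boldsymbol{D}_m^{-1/2}\bS_m\boldsymbol{D}_m^{-1/2}\) with \(\boldsymbol{D}_m=\mathrm{diag}(S_{m,1,1},\dotsc,S_{m,d,d})\), we have \(\ba^T\bR_m\ba=\boldsymbol{b}^T\bS_m\boldsymbol{b}\) with \(\boldsymbol{b}=\boldsymbol{D}_m^{-1/2}\ba\neq 0\). This requires \(S_{m,k,k}>0\), which holds because the diagonal entries are themselves quadratic forms of \(\bS_m\). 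So it suffices to show that \(\bS_m\) is positive definite almost surely for all \(m>n\). We do this by induction on \(m\), the base case \(m=n\) being the hypothesis: \(\bS_n=\frac1n\sum_\ell\tilde{\bx}_\ell\tilde{\bx}_\ell^T\) is a Gram-type matrix, hence positive semidefinite, and nonsingular by assumption.

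For the inductive step we use the recursion of model A in matrix form. Writing \(\boldsymbol{W}_{m+1}\) for the vector with entries \(\sqrt{S_{m,k,k}}\,\tilde{X}_{m+1,k}\), we get
\[
\bS_{m+1}=\frac{m}{m+1}\bS_m+\frac{1}{m+1}\boldsymbol{W}_{m+1}\boldsymbol{W}_{m+1}^T .
\]
For any \(\boldsymbol{b}\neq0\) this gives
\[
\boldsymbol{b}^T\bS_{m+1}\boldsymbol{b}=\frac{m}{m+1}\boldsymbol{b}^T\bS_m\boldsymbol{b}+\frac{1}{m+1}(\boldsymbol{b}^T\boldsymbol{W}_{m+1})^2\geq\frac{m}{m+1}\boldsymbol{b}^T\bS_m\boldsymbol{b}>0
\]
on the event where \(\bS_m\) is positive definite. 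The rank-one update is positive semidefinite, so positive definiteness propagates deterministically. The only possible failure is the event where \(\boldsymbol{W}_{m+1}\) is not well defined or where the Gaussian vector with correlation \(\bR_m\) cannot be generated. Both are excluded on the event \(\{\bS_m\succ0\}\), since there \(\bR_m\) is a genuine correlation matrix. Intersecting countably many probability-one events over \(m\) yields the almost sure statement simultaneously for all \(m\).

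The main obstacle is consistency with the explicit generating formula the paper gives for \(\tilde{X}_{m+1,k}\) through the \(Z_{m,\ell}\). We must check that this formula indeed produces a vector with correlation matrix \(\bR_m\), so that the recursion used above is the one actually being run. To verify this, set \(\boldsymbol{A}_m\) as the \(d\times m\) matrix with columns \(\tilde{\bx}_\ell\) for \(\ell\le n\) and \(\boldsymbol{W}_\ell\) for \(\ell>n\). Then \(\bS_m=\frac1m\boldsymbol{A}_m\boldsymbol{A}_m^T\). The vector \(\sqrt{S_{m,k,k}}\,\tilde{X}_{m+1,k}\) equals \(\frac1m\boldsymbol{A}_m\boldsymbol{Z}_m\) up to a scaling, whose covariance is proportional to \(\bS_m\). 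The normalization constant in the displayed formula needs care: it appears to be \(1/m\) rather than \(1/\sqrt m\), which matters for standardized marginals but not for the correlation matrix. This representation also gives a second proof of the claim. Since \(\boldsymbol{A}_m\boldsymbol{A}_m^T\succeq\boldsymbol{A}_n\boldsymbol{A}_n^T\) (extra columns only add positive semidefinite terms), we get \(\boldsymbol{b}^T\bS_m\boldsymbol{b}\ge \frac nm\boldsymbol{b}^T\bS_n\boldsymbol{b}>0\). This bound is deterministic given the hypothesis on \(\bS_n\), and it is the cleanest route to the conclusion.
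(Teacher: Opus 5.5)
Your proof is correct, and it departs from the paper's at the decisive step.

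\textbf{Common starting point.} Both arguments use the Gram representation $S_{m,k,h}=m^{-1}\sum_{\ell=1}^m W_{\ell,k}W_{\ell,h}$, which is your $\bS_m=m^{-1}\boldsymbol{A}_m\boldsymbol{A}_m^T$.

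\textbf{The paper's route.} It discards the $n$ data columns and keeps only the simulated term $\ell=n+1$. It computes the expectation of that term as $\ba^T\bS_n\ba>0$. It then concludes that a nonnegative random variable with positive expectation is positive almost surely. That inference is not valid, since such a variable can vanish with positive probability. The computation also pulls the random normalizers $S_{m,h,h}$ out of the expectation, although they depend on $W_{n+1},\dotsc,W_m$.

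\textbf{Your route.} You keep the $n$ data columns instead, which are non-random. This gives $\bS_m\succeq \frac{n}{m}\bS_n\succ 0$ deterministically; your induction through the positive semidefinite rank-one updates $\frac{1}{m+1}W_{m+1}W_{m+1}^T$ is the same fact in recursive form. You then transfer the conclusion to $\bR_m=\boldsymbol{D}_m^{-1/2}\bS_m\boldsymbol{D}_m^{-1/2}$ with the normalization placed correctly.

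\textbf{What this buys.} Positivity holds for every realization, not merely almost surely. The nonsingularity hypothesis is used once and directly. The argument is purely algebraic, so it does not depend on how $\tilde X_{m+1}$ is generated. In particular, it is unaffected by the normalization issue you correctly spotted: the factor $1/m$ in the paper's generating formula gives conditional variance $1/m$, and $1/\sqrt{m}$ is needed for standardized marginals.

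\textbf{Contrast.} The paper's single-term bound could only be rescued by a genuinely probabilistic argument, for instance that $W_{n+1}$ is a nondegenerate Gaussian vector. Even then, the random weights $a_h/\sqrt{S_{m,h,h}}$ would still need handling. Your approach needs none of this.
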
	

\begin{proof}
Let
\begin{equation*}
W_{\ell,h}= \begin{cases}
\tilde{x}_{\ell,h} &\text{if } \ell=1,\dotsc,n,\\
\sqrt{S_{\ell-1,h,h}}\,\tilde{X}_{\ell,h} &\text{if } \ell = n+1,n+2,\dotsc	
\end{cases}
\end{equation*}	
for \(\ell=1,2,\dotsc\) and \(h=1,\dotsc,d\). In this way,
%\begin{equation*}
$S_{m,k,h}=m^{-1}\sum_{\ell=1}^{m} W_{\ell,k}\,W_{\ell,h}$. 
%\end{equation*}
Therefore for \(m=n+1, n+2,\dotsc \)
\begin{equation}\label{eq: posdef}
\begin{split}
\ba^T \bR_m \ba &= \sum_{h=1}^d \sum_{k=1}^d a_h\, a_k\, R_{m,k,h}
= \sum_{h=1}^d \sum_{k=1}^d a_h\, a_k\,
 \frac{S_{m,k,h}}{\sqrt{S_{m,k,k}\,S_{m,h,h}}}\\
&= \frac{1}{m} \sum_{l=1}^m \frac{\left(\sum_{h=1}^d a_h W_{\ell,h}\right)^2}{S_{m,h,h}}
\geq \frac{1}{m} 
\frac{\left(\sum_{h=1}^d a_h W_{n+1,h}\right)^2}{S_{m,h,h}}
\end{split}
\end{equation}
This proves that \(\bR_m\) is positive semidefinite. Being
$\sum_{h=1}^d\E(\tilde{X}_{n+1,h})=0$
we have that 
$
\E\left(\sum_{h=1}^d a_h W_{n+1,h}\right)=0,
$
and therefore the expectation of the last term in \eqref{eq: posdef} is:
\begin{equation*}
\begin{split}
\E&\left(\frac{1}{m} 
\frac{\left(\sum_{h=1}^d a_h W_{n+1,h}\right)^2}{S_{m,h,h}}\right)
= \frac{1}{mS_{m,h,h}}\var\left(\sum_{h=1}^d a_h W_{n+1,h}\right)\\
&=\sum_{h=1}^d a_h^2 S_{n,h,h} + 
\sum_{h\neq k}a_h\, a_k \, \sqrt{S_{n,h,h}\,S_{n,k,k}}\,R_{n,h,k}\\
&= \sum_{h=1}^d \sum_{k=1}^d a_h\, a_k \, S_{n,k,h}
\end{split}	
\end{equation*}
which is positive by assumption. 
The expectation of the nonnegative random variable given by \eqref{eq: posdef} 
 is positive and therefore such random variable must be positive almost surely and the proof is complete. 
\end{proof}

\vspace{0.1in}
\noindent
Proceeding in a similar way to the bivariate case, it is possible to prove that 
the random matrices \(\bS_m=(S_{m,k,h})_{k,h=1,\dotsc,d}\) and \(\bR_m\) converge almost surely, as \(m\) diverges to infinity, which yields asymptotic exchangeability of the random sequence \(\bX_1, \bX_2, \dotsc\), where \(\bX_m = (X_{m,1},\dotsc,X_{m,d})\), for \(m=1,2,\dotsc\).

\section{Illustrations}

In this section we present two examples, the first based on independent and identically distributed observations and the second based on a linear regression model.

\subsection{IID model}

We start with a simple example involving the beta distribution; i.e. the $X_{1:n}$ are i.i.d. from a beta distribution with parameters $a$ and $b$, and so the likelihood function is
$$\prod_{i=1}^n \frac{\Gamma(a+b)}{\Gamma(a)\Gamma(b)}x_i^a(1-x_i)^b.$$
A traditional Bayesian analysis is tricky due to the appearance of gamma functions of the parameters. This pretty much forces the use of a Metropolis algorithm using suitable proposal density functions for both the parameters. 

\begin{figure}
    \centering
    \includegraphics[width=14cm,height=6cm]{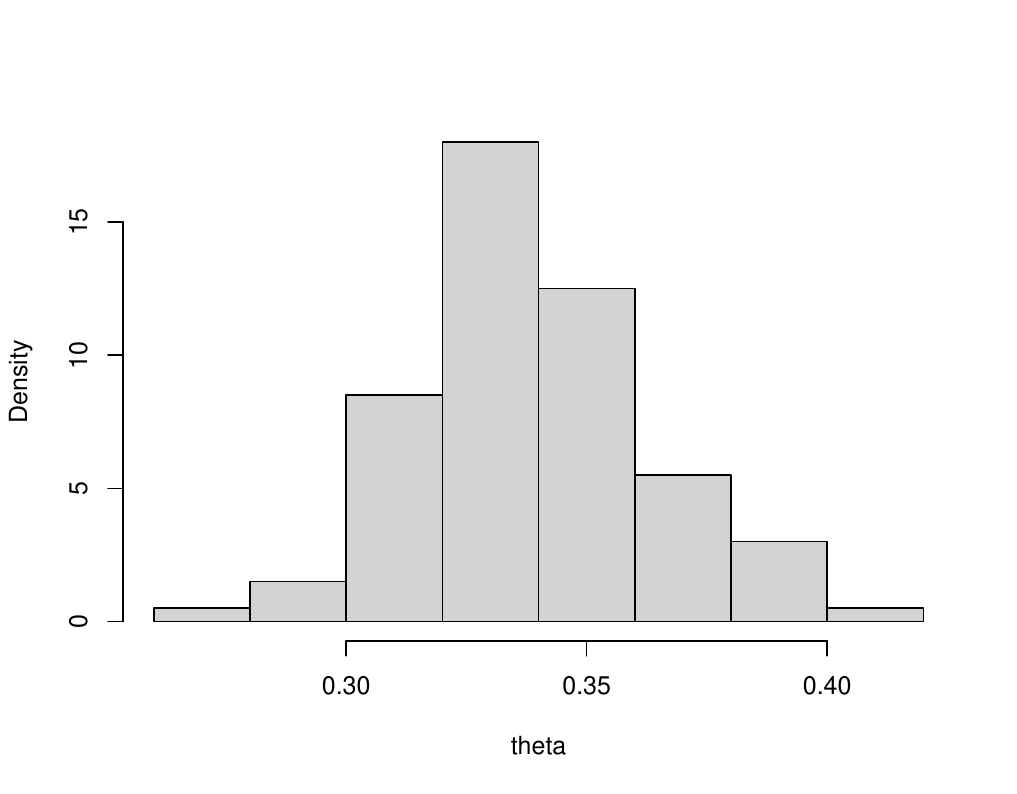}
    \caption{Mean posterior samples for the beta illustration}
    \label{fig1}
\end{figure}

\begin{figure}
    \centering
    \includegraphics[width=14cm,height=6cm]{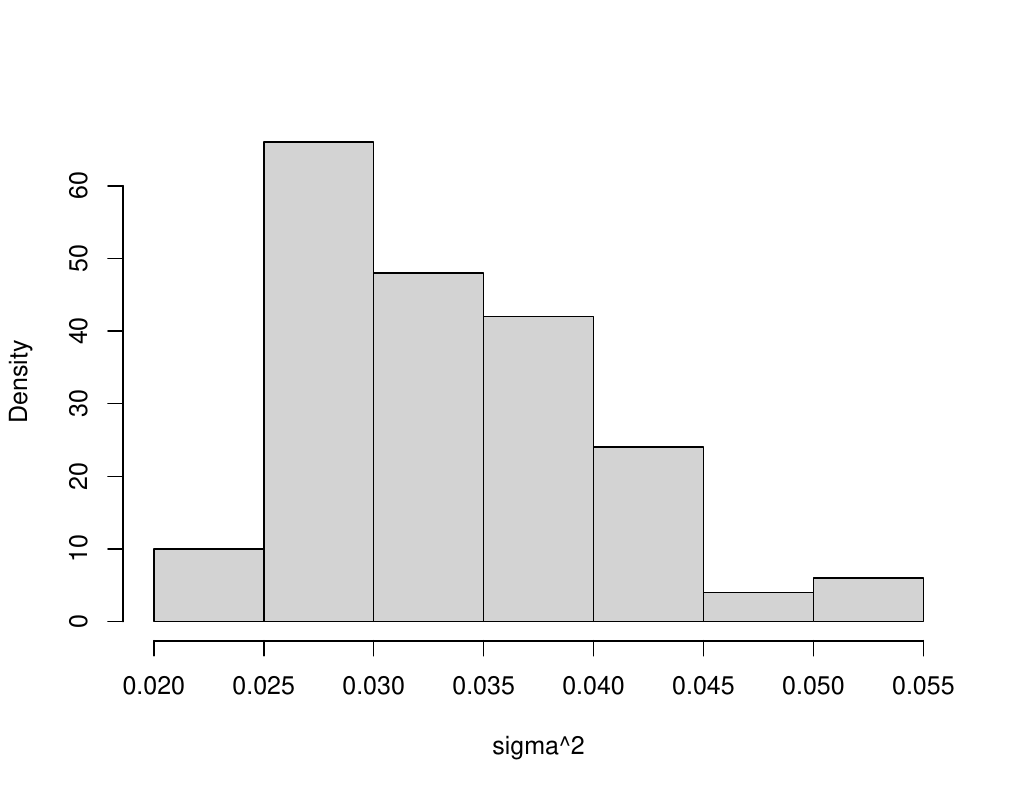}
    \caption{Variance posterior samples from the beta illustration}
    \label{fig2}
\end{figure}

\noindent
On the other hand, using the Hill predictive distributions; we can sample a missing $X_{n+1:N}$ for some suitably large $N$ in a very easy way. 
At iteration $m$ we have $X_{1:m}$ which are ordered to give $X_{(i)}$, for $i=0,\ldots,m+1$, where $X_{(0)}=1$ and $X_{(m+1)}=1$. To get $X_{m+1}$ an interval $k\in\{1,\ldots,m+1\}$ is selected with probability $1/(m+1)$ and then
$X_{m+1}$ is taken uniformly from
$(X_{(k-1)},X_{(k)})$.

We sampled $n=50$ observations from the beta distribution with parameters $(2,5)$ and took $N$ to be 1000. From each run we compute the mean, $\theta$, and variance, $\sigma^2$, of the sampled values. This would allow us to get the corresponding $a$ and $b$ values via
$$\theta=\frac{a}{a+b}\quad\mbox{and}\quad \sigma^2=\frac{ab}{(a+b)^2(a+b+1)}.$$
Based on 100 such runs, Fig.~\ref{fig1} presents the $\theta$ values and Fig.~\ref{fig2} the $\sigma^2$ values.

\begin{figure}
    \centering
    \includegraphics[width=14cm,height=6cm]{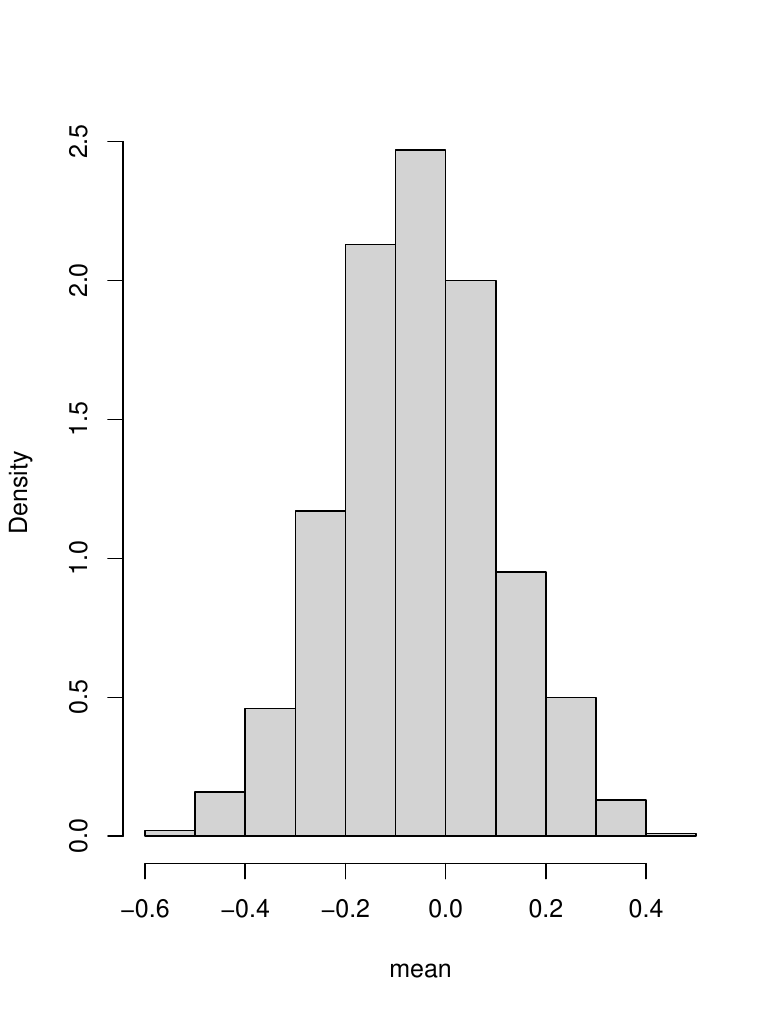}
    \caption{Mean posterior samples for the normal illustration}
    \label{fig3}
\end{figure}

\begin{figure}
    \centering
    \includegraphics[width=14cm,height=6cm]{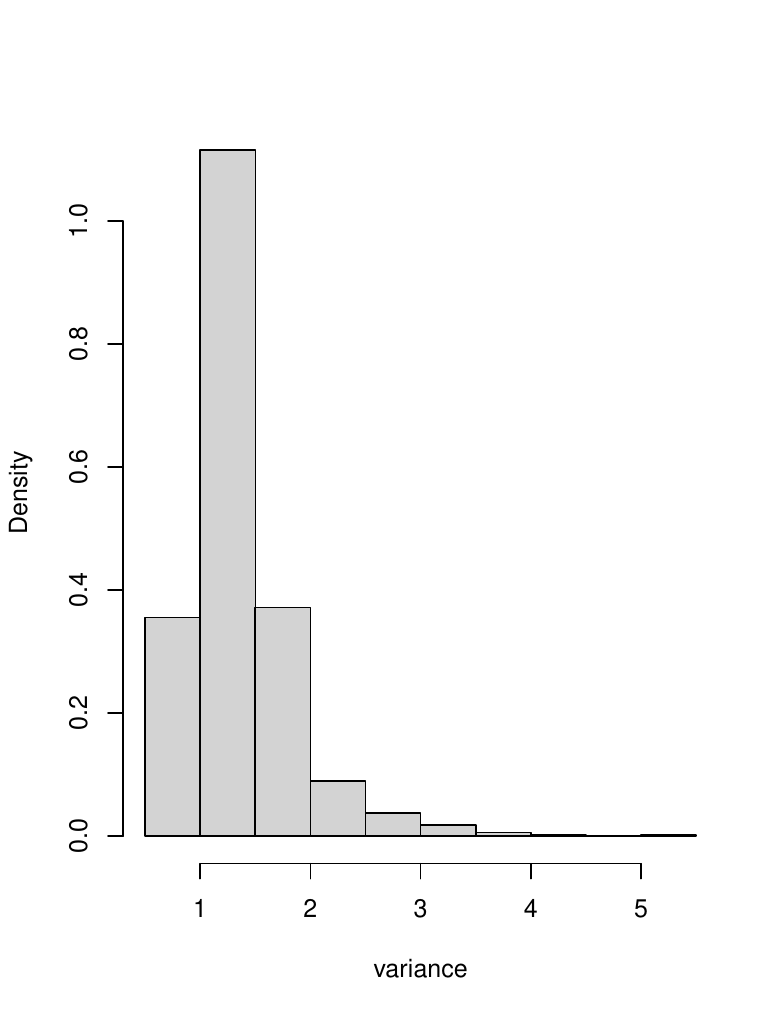}
    \caption{Variance posterior samples for the normal illustration}
    \label{fig4}
\end{figure}

One of the logistical issues to overcome is when the data are unbounded. This is best handled by using transformations, specifically to $(0,1)$. So, if the data are on the real line, we transform to $(0,1)$ using $x\to e^x/(1+e^x)$. Once we obtain $X_{1:n}$ on $(0,1)$, we can then transform back to the real line using the reverse transform $x\to \log(x/(1-x))$.
To demonstrate this, we take $n=50$ samples from the standard normal distribution function. Then 1000 sample means and variances are taken from 1000 runs each of length $N=1000$. Histograms of the posterior samples of means and variances are presented in Figs.~\ref{fig3} and \ref{fig4}, respectively.

\subsection{Linear regression model} Suppose data $(y_i,x_i)_{i=1:n}$ have been observed and a linear model of the form
$y=X\beta+\sigma\varepsilon$ is adopted where $X$ is a $n\times p$ design matrix, $\beta$ is a $p$-dimensional vector of regression parameters, $\sigma^2$ is an unknown variance parameter and $\varepsilon$ is a $n$-dimensional vector of independent standard normal random variables. For the sample of size $n$, let $\widehat\beta$ be the ordinary least squares estimator and $\widehat\sigma$ the usual estimator of the standard deviation; i.e.
$\widehat\sigma^2=||y-X\widehat\beta||^2/(n-p)$.

This gives us estimators of the error vector; i.e.
$$\widehat{e}=(y-X\widehat\beta)/\widehat\sigma.$$
These are now ordered and transformed, as in the previous illustration, to lie on $(0,1)$. The design vector $x_{n+1}$ for the to be sampled $n+1$ observation is taken uniformly from the observed design vectors; i.e. $x_{n+1}=x_i$ for $i=1,\ldots,n$, with probability $1/n$. If $\widetilde{e}_{(i)}=e^{\widehat{e}_{(i)}}/(1+e^{\widehat{e}_{(i)}})$ are the ordered transformed errors, then $\widehat{e}_{n+1}=\log(\widetilde{e}_{n+1}/(1-\widetilde{e}_{n+1}))$
where $\widetilde{e}_{n+1}$ is sampled uniformly from one of the ordered intervals with equal probability. The sampled observation $y_{n+1}$ is given by
$$y_{n+1}=x_{n+1}\widehat\beta+\widehat\sigma\,\widehat{e}_{n+1}.$$
From here the new $\widehat\beta$ and $\widehat\sigma$ can be computed based on the $n+1$ samples, the first $n$ being the observed data and the $n+1$ being the just sampled observation. Then, as has just been described, we can get $\widehat{e}_{n+2}$, and so on. For some large $N$ we can stop this process and use the $\widehat\beta$ with the $N$ samples, the first $n$ being the data and the remaining $N-n$ being sampled, as a random parameter taken from the posterior distribution. Repeating this gives multiple samples from the posterior.

\begin{figure}
    \centering
    \includegraphics[width=14cm,height=6cm]{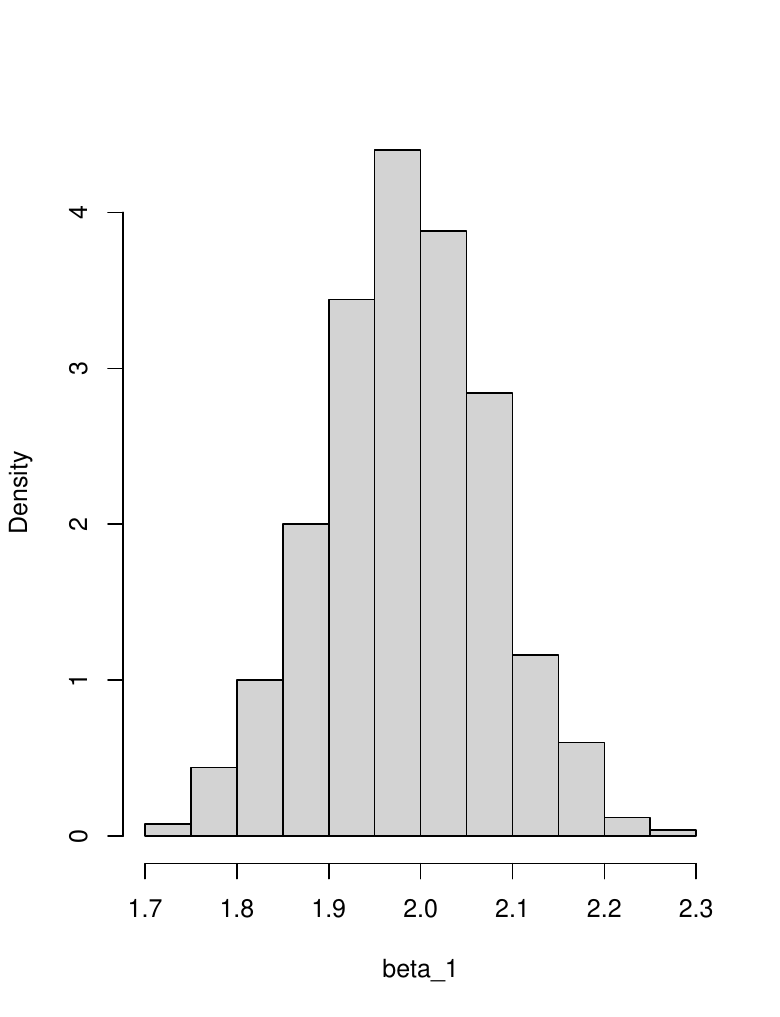}
    \caption{Posterior samples for $\beta_1$}
    \label{fig5}
\end{figure}

\begin{figure}
    \centering
    \includegraphics[width=14cm,height=6cm]{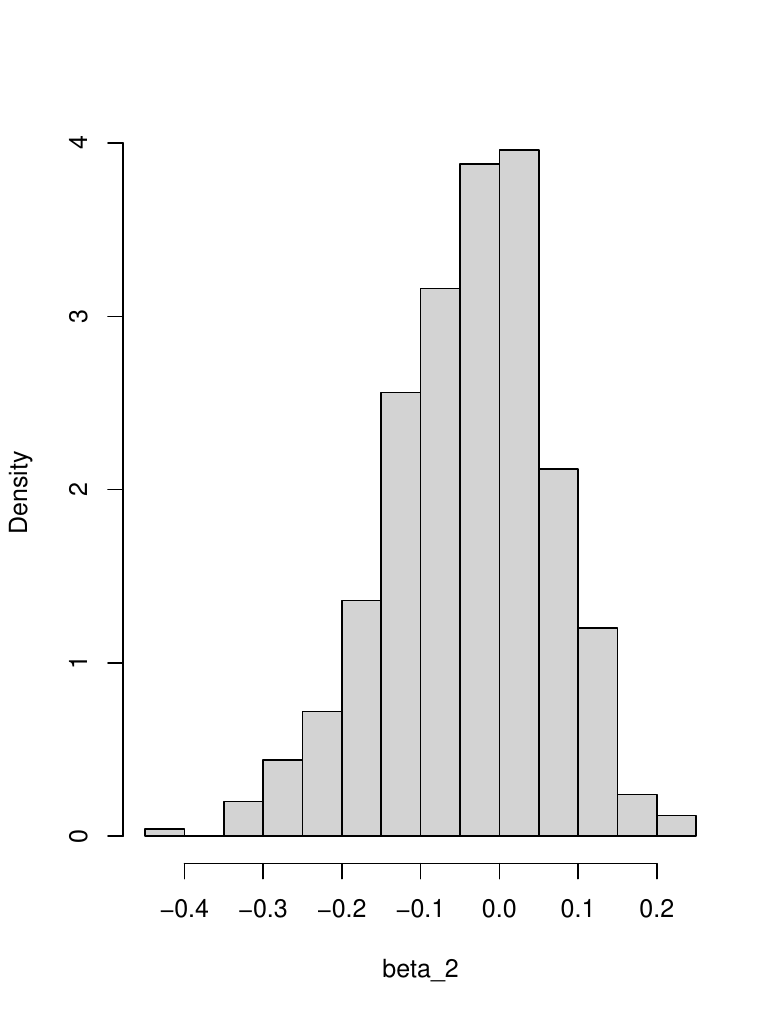}
    \caption{Posterior samples for $\beta_2$}
    \label{fig6}
\end{figure}

For an illustration, we took $n=50$ and $p=2$, generating a dataset with $\beta_1=2$ and $\beta_2=0$ and $\sigma=1$. The ordinary least squares estimator is $\widehat\beta=(1.96,-0.11)$. Posterior samples for $\beta_1$ and $\beta_2$ are given in Figs.~\ref{fig5} and \ref{fig6}, respectively. The posterior means are $(1.99,-0.04)$. 

\subsection{Bivariate model}

\begin{figure}
    \centering
    \includegraphics[width=14cm,height=6cm]{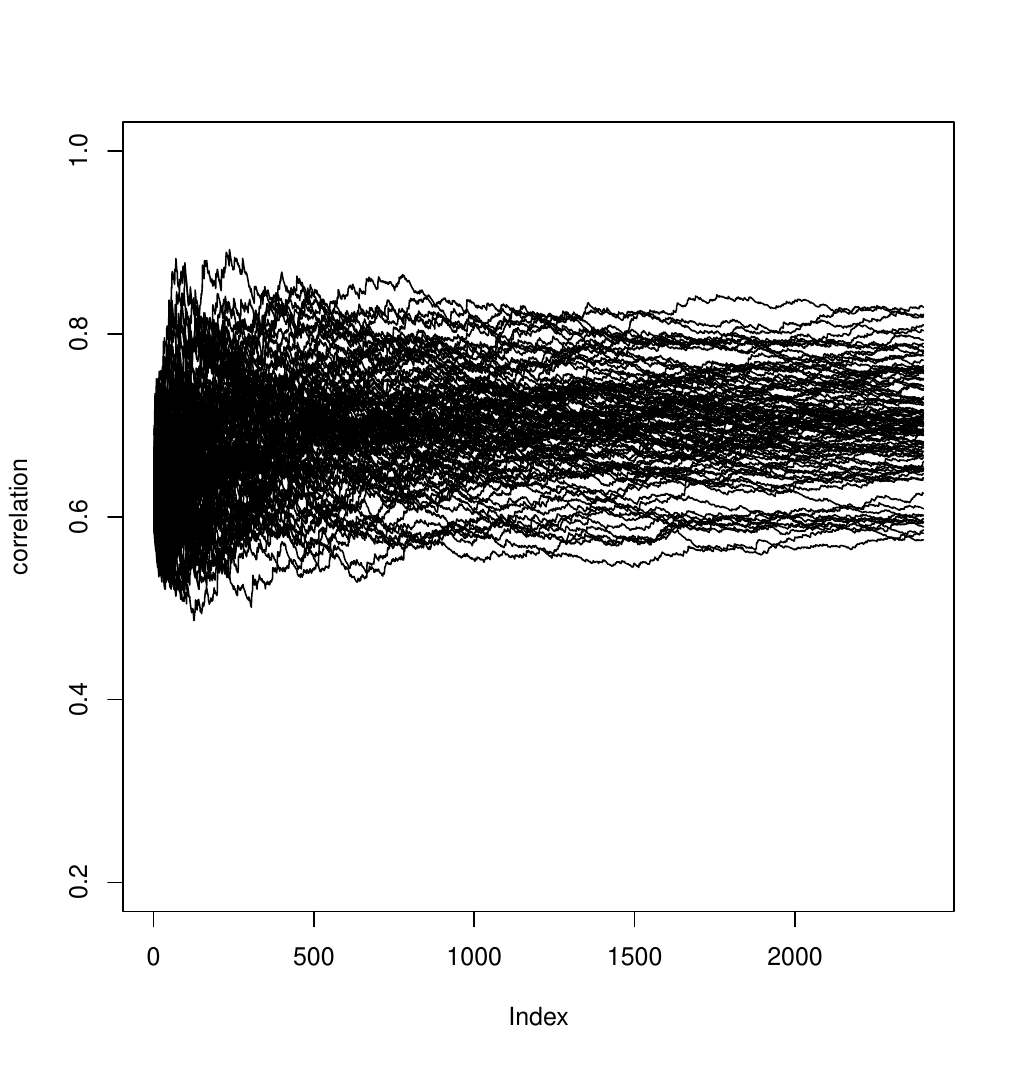}
    \caption{Posterior samples for the correlation with trajectories}
    \label{figbi}
\end{figure}

In this subsection we illustrate a bivariate model (model B). The original data are transformed to bivariate Gaussian variables $(\widetilde{x}_i,\widetilde{y}_i)$ for $i=1,\ldots,n$ with means 0 and variances 1 and with correlation $\rho$ which is estimated as
$$\widehat{\rho}_n=\frac{\sum_{i=1}^n \wt{x}_i\wt{y}_i}{\sqrt{\sum_{i=1}^n \wt{x}_i^2\sum_{i=1}^n \wt{y}_i^2}}.$$
We then generate $(\wt{x}_{n+1:N},\wt{y}_{n+1:N})$ for some large $N$ via
$$\wt{x}_{m+1}=z_{m+1}\quad\mbox{and}\quad \wt{y}_{m+1}=\widehat{\rho}_m \wt{x}_{m+1}+\sqrt{1-\widehat{\rho}_m^2}\,z'_{m+1},\quad m>n,$$
where $(z_m,z'_m)$ are independent sequences of independent standard normal random variables and 
$\widehat{\rho}_m$ is the sequence of estimators for the correlation. Fig.~\ref{figbi} shows 100 trajectories of such sequences where the original sample size is $n=100$ and the $N$ is taken to be 2500. As can be seen, each sequence $(\widehat{\rho}_m)$ converges to a limit with each limit representing a sample from the posterior.  

\section{Discussion} Essential to the predictive approach to posterior inference is the construction of a credible one step ahead predictive model. The empirical distribution has the obvious drawback of only sampling observed values - this method being known as the Bayesian bootstrap. The Hill predictive model we argue is of equal foundational importance, and is being more widely recognized under the name conformal prediction, see \cite{Papad2002}. 

The model has good properties and is easy to implement in practice. Moreover, we have introduced a novel bivariate and multivariate version of the Hill model which to our knowledge has not been achieved in such a way that in the multivariate case the marginal model is also Hill. For our model, this key property does hold.

%\bibliographystyle{agsm}
%\bibliography{hill-bib}

\begin{thebibliography}{16}
	\providecommand{\natexlab}[1]{#1}
	\providecommand{\url}[1]{\texttt{#1}}
	\expandafter\ifx\csname urlstyle\endcsname\relax
	\providecommand{\doi}[1]{doi: #1}\else
	\providecommand{\doi}{doi: \begingroup \urlstyle{rm}\Url}\fi
	
	\bibitem[Aldous(1985)]{Aldous85}
	D.~J. Aldous.
	\newblock Exchangeability and related topics.
	\newblock In \emph{\'Ecole d'\'et\'e{} de probabilit\'es de {S}aint-{F}lour,
		{XIII}---1983}, volume 1117 of \emph{Lecture Notes in Math.}, pages 1--198.
	Springer, Berlin, 1985.
	
	\bibitem[Angelopoulos and Bates(2021)]{Angel21}
	A.~Angelopoulos and S.~Bates.
	\newblock A gentle introduction to conformal prediction and distribution-free
	uncertainty quantification.
	\newblock \emph{arXiv:2107.07511}, 2021.
	
	\bibitem[Billingsley(1995)]{Billingsley95}
	P.~Billingsley.
	\newblock \emph{Probability and Measure}.
	\newblock John Wiley \& Sons, Inc., New York, third edition, 1995.
	
	\bibitem[Blackwell and MacQueen(1973)]{Blackwell1973}
	D.~Blackwell and J.~MacQueen.
	\newblock Ferguson distributions via {P}\'olya urn schemes.
	\newblock \emph{Annals of Mathematical Statistics}, 1:\penalty0 353--355, 1973.
	
	\bibitem[Doob(1949)]{Doob1949}
	J.~L. Doob.
	\newblock Application of the theory of martingales.
	\newblock \emph{Actes du Colloque International Le Calcul des Probabilites et
		ses applications, Paris CNRS}, pages 23--27, 1949.
	
	\bibitem[Ferguson(1973)]{Ferguson1973}
	T.~Ferguson.
	\newblock A {B}ayesian analysis of some nonparametric problems.
	\newblock \emph{Annals of Mathematical Statistics}, 1:\penalty0 209--230, 1973.
	
	\bibitem[Fong et~al.(2023)Fong, Holmes, and Walker]{Fong2023}
	E.~Fong, C.~Holmes, and S.~G. Walker.
	\newblock Martingale posterior distributions.
	\newblock \emph{Journal of the Royal Statistical Society, Series B},
	85:\penalty0 1357--1391, 2023.
	
	\bibitem[Gammerman et~al.(1998)Gammerman, Vovk, and Vapnik]{Gammerman98}
	A.~Gammerman, V.~Vovk, and V.~Vapnik.
	\newblock Learning by transduction.
	\newblock \emph{Uncertainty in Artificial Intelligence}, 14:\penalty0
	148–155, 1998.
	
	\bibitem[Hill(1968)]{Hill68}
	B.~M. Hill.
	\newblock Posterior distribution of percentiles: Bayes' theorem for sampling
	from a population.
	\newblock \emph{Journal of the American Statistical Association}, 63\penalty0
	(322):\penalty0 677--691, 1968.
	
	\bibitem[Papadopoulos et~al.(2002)Papadopoulos, Proedrou, Vovk, and
	Gammerman]{Papad2002}
	H.~Papadopoulos, K.~Proedrou, V.~Vovk, and A.~Gammerman.
	\newblock Inductive confidence machines for regression.
	\newblock In \emph{Machine Learning: European Conference on Machine Learning},
	pages 345--356, 2002.
	
	\bibitem[Robbins and Siegmund(1971)]{Robbins71}
	H.~Robbins and D.~Siegmund.
	\newblock A convergence theorem for non negative almost supermartingales and
	some applications.
	\newblock In J.~S. Rustagi, editor, \emph{Optimizing Methods in Statistics},
	pages 233--257. Academic Press, 1971.
	
	\bibitem[Rubin(1981)]{Rubin1981}
	D.~Rubin.
	\newblock The {B}ayesian bootstrap.
	\newblock \emph{Annals of Statistics}, 9:\penalty0 130--134, 1981.
	
	\bibitem[Sempi(2004)]{Sempi04}
	C.~Sempi.
	\newblock Convergence of copulas: critical remarks.
	\newblock \emph{Rad. Mat.}, 12\penalty0 (2):\penalty0 241--249, 2004.
	
	\bibitem[Toccaceli and Gammerman(2019)]{Tocca2019}
	P.~Toccaceli and A.~Gammerman.
	\newblock Combination of inductive mondrian conformal predictors.
	\newblock \emph{Machine Learning}, 108:\penalty0 489–510, 2019.
	
	\bibitem[Vovk(2022)]{Vovk22}
	V.~Vovk.
	\newblock In \emph{Algorithmic learning in a random world}. New York: Springer,
	2022.
	
	\bibitem[Vovk and Shafer(2008)]{Vovk08}
	V.~Vovk and G.~Shafer.
	\newblock A tutorial on conformal prediction.
	\newblock \emph{Journal of Machine Learning Research}, 9:\penalty0 371–421,
	2008.
	
\end{thebibliography}

\end{document}